\theoremstyle{plain}
\newtheorem{theorem}{Theorem}
\newtheorem*{theorem*}{Theorem}
\newtheorem{lemma}[theorem]{Lemma}
\newtheorem*{lemma*}{Lemma}
\newtheorem{corollary}[theorem]{Corollary}
\newtheorem{observation}[theorem]{Observation}
\newtheorem*{observation*}{Observation}
\newtheorem*{conjecture*}{Conjecture}
\theoremstyle{definition}
\newtheorem{definition}[theorem]{Definition}
\newcommand{\td}[1]{\mathbb{#1}}
\newcommand{\bigO}[1]{{\mathcal{O}\!\left(#1\right)}}
\newcommand{\pname}[1]{\textsc{#1}}
\newcommand{\probl}[3]{
  \begin{flushleft}
    \fbox{
      \begin{minipage}{0.95\linewidth}
        \noindent {\pname{#1}}\\
        {\bf Instance:} #2\\
        {\bf Question:} #3
      \end{minipage}}
    \medskip
  \end{flushleft}
}
\newcommand{\inners}{1.2pt}
\newcommand{\outers}{1pt}
\newclass{\Hard}{hard}
\newclass{\Hness}{hardness}
\newcommand{\NPH}{\NP\text{-}\Hard}
\newcommand{\WH}{\W\textsc{[1]-}\Hard}
\newclass{\para}{para}
\newclass{\Complete}{complete}
\newclass{\Cness}{completeness}
\newcommand{\NPc}{\NP\textsc{-}\Complete}
\newfunc{\dist}{dist}
\newfunc{\girth}{girth}
\newfunc{\nd}{nd}
\newfunc{\YES}{YES}
\newfunc{\NOi}{NO}
\newfunc{\ff}{ff}
\newfunc{\dc}{dc}
\newfunc{\dcc}{d\overline{c}}
\newfunc{\ml}{ml}
\newfunc{\cf}{cf}
\newfunc{\tw}{tw}
\newfunc{\ext}{ext}
\newfunc{\pat}{pat}
\newfunc{\parts}{parts}
\newfunc{\clique}{clique}
\newcommand{\angled}[1]{\left\langle{#1}\right\rangle}
\BODY\end{proof}}
\title{On structural parameterizations of the selective coloring problem}
\date{}
\author[1]{Guilherme C. M. Gomes}
\author[1]{Vinicius F. dos Santos}
\affil[1]{Departamento de Ci\^encia da Computa\c{c}\~{a}o, Universidade Federal de Minas Gerais -- Belo Horizonte, Brazil}
\begin{document}

\maketitle
%
%

    \begin{abstract}
    In the \pname{Selective Coloring} problem, we are given an integer $k$, a graph $G$, and a partition of $V(G)$ into $p$ parts, and the goal is to decide whether or not we can pick exactly one vertex of each part and obtain a $k$-colorable induced subgraph of $G$.
    This generalization of \pname{Vertex Coloring} has only recently begun to be studied by Demange et al. [Theoretical Computer Science, 2014], motivated by scheduling problems on distributed systems, with Guo et al. [TAMC, 2020] discussing the first results on the parameterized complexity of the problem.
    In this work, we study multiple structural parameterizations for \pname{Selective Coloring}.
    We begin by revisiting the many hardness results of Demange et al. and show how they may be used to provide intractability proofs for widely used parameters such as pathwidth, distance to co-cluster, and max leaf number.
    Afterwards, we present fixed-parameter tractability algorithms when parameterizing by distance to cluster, or under the joint parameterizations treewidth and number of parts, and co-treewidth and number of parts.
    Our main contribution is a proof that, for every fixed $k \geq 1$, \pname{Selective Coloring} does not admit a polynomial kernel when jointly parameterized by the vertex cover number and the number of parts, which implies that \pname{Multicolored Independent Set} does not admit a polynomial kernel under the same parameterization.
    \end{abstract}
    
    \section{Introduction}

Practical scheduling or task assignment problems are often modeled as graph coloring problems~\cite{mutual_exclusion_scheduling,scheduling_traffic,domain_decomposition}, with different constraints on the intended application leading to problems with significantly different complexities.
For instance, \pname{Vertex Coloring} can be solved in polynomial time on perfect graphs~\cite{perfect_color}, while \pname{Equitable Coloring} is \NPc\ on block graphs~\cite{equitable_dmtcs}.
In this work, we investigate a problem first discussed by Li and Simha~\cite{selective_wavelength} in the context of the wavelength division multiplexing optical networks, which they dubbed the \pname{Partition Coloring} problem, which is also known as \pname{Selective Coloring}; they showed that the problem was \NPH\ and began working on heuristics for it, which have since then been further developed by Noronha and Ribeiro~\cite{noronha}.
We adopt the later nomenclature, and formally define the problem as follows:

\probl{Selective Coloring}{A graph $G$, a partition $\mathcal{V}$ of $V(G)$, and an integer $k$.}{Is there a $k$-colorable induced subgraph of $G$ containing exactly one vertex of each part of $\mathcal{V}$?}

When the number of colors $k$ is fixed, we refer to the problem as \pname{Selective $k$-Coloring}.
In the complexity front, Demange et al.~\cite{selective_complexity} conducted an extensive study on the complexity of \pname{Selective $k$-Coloring} and the optimization version of \pname{Selective Coloring}, which they named \pname{Sel-Col}, on graph classes.
We only work with the decision version of the problem, so the following results are translations of their proofs regarding \pname{Sel-Col}.
On the negative side, Demange et al.~\cite{selective_complexity} proved that \pname{Selective Coloring} is \NPc\ on split graphs, complete multipartite graphs, and planar graphs of maximum degree three, that \pname{Selective $k$-coloring} is \NPc\ when $k = 1$ on paths, cycles, disjoint union of $C_4$'s, and on subcubic planar graphs.
Meanwhile, they showed that \pname{Selective Coloring} can be solved in polynomial time on disjoint union cliques, threshold graphs, graphs with stability number two, and, for every fixed $q$, on complete $q$-partite graphs.
Later on, Demange et al.~\cite{min_max_selective} settled the complexity of \pname{Selective $k$-coloring} for other graphs classes, showing that, when $k=1$, the problem is \NPc\ on subcubic planar unit disk graphs and permutation graphs and, when $k=2$, it is \NPc\ on twin graphs~\cite{twin_graph}.
They also worked on a worst case version of \pname{Selective Coloring}, i.e. the task to finding an induced subgraph of $G$ that contains one vertex of each part of $\mathcal{V}$ that has the maximum possible chromatic number.
In both papers, the authors further refined their analysis by imposing constraints on the parts of the partition, but we omit these discussions for brevity.

In terms of parameterized complexity, the results of Demange et al.~\cite{selective_complexity,min_max_selective} imply that \pname{Selective Coloring} is \para\NPH\ on multiple structural parameters, namely: treedepth, distance to disjoint paths, cotreewidth, max leaf number, distance to co-cluster, distance to bipartite, and feedback edge set, even when the number of colors is also used as parameter.
More recently, Guo et al.~\cite{selective_tamc} showed an \XP\ algorithm when parameterized by the number of parts in $\mathcal{V}$ and some initial parameterized complexity results.
At this point, it is important to note that, while \pname{Selective Coloring} is a clear generalization of \pname{Vertex Coloring}, \pname{Selective $1$-coloring} is equivalent to a central problem in parameterized complexity known as \pname{Multicolored Independent Set}~\cite{cygan_parameterized} where, given a graph $G$ and a $k$-coloring of its vertices, we are tasked with finding an independent set containing one vertex of each color.
Theorem 5.6 of Demange et al.~\cite{selective_complexity} implies that \pname{Multicolored Independent} is \para\NPH\ when parameterized by max leaf number if all parts have two or three vertices; it is currently unknown whether the latter condition is necessary to achieve hardness under this parameterization.

\smallskip
\noindent\textbf{Our results.} The main contributions of this work are complexity results for structural parameterizations of \pname{Selective Coloring}.
In particular, we show that \pname{Selective Coloring} is fixed parameter tractable when parameterized by distance to cluster, treewidth and number of parts of $\mathcal{V}$, and by cotreewidth and number of colors.
The first result generalizes the proof of Demange et al.~\cite{selective_complexity} that \pname{Selective Coloring} is polynomial time solvable on disjoint union of cliques, while the latter two imply that their reductions for graphs of constant treewidth and co-treewidth cannot be strengthened to also fix the number of parts and colors, respectively.
On the negative side, we show that, for every fixed $k \geq 1$, \pname{Selective $k$-Coloring} does not admit a polynomial kernel when simultaneously parameterized by vertex cover and number of parts unless $\NP \subseteq \coNP/\poly$, which implies that \pname{Multicolored Independent Set} has no polynomial kernel under the same parameterization and complexity hypothesis, which we believe to be of special interest to the community.
    \section{Notation and Terminology}

We refer the reader to~\cite{cygan_parameterized} for basic background on parameterized complexity, and recall here only some basic definitions.
A \emph{parameterized problem} is a language $L \subseteq \Sigma^* \times \mathbb{N}$. 
For an instance $I=(x,q) \in \Sigma^* \times \mathbb{N}$, $q$ is called the \emph{parameter}. 
A parameterized problem is \emph{fixed-parameter tractable} (\FPT) if there exists an algorithm $\mathcal{A}$, a computable function $f$, and a constant $c$ such that given an instance $(x,q)$, $\mathcal{A}$ correctly decides whether $I \in L$ in time bounded by $f(q) \cdot |I|^c$; in this case, $\mathcal{A}$ is called an \emph{\FPT\ algorithm}.
A kernelization	algorithm, or just \emph{kernel}, for a parameterized problem $\Pi$ takes an instance~$(x,q)$ of the problem and, in time polynomial in $|x| + q$, outputs an instance~$(x',q')$ such that $|x'|, q' \leqslant g(q)$ for some function~$g$, and $(x,q) \in \Pi$ if and only if $(x',q') \in \Pi$.
Function~$g$ is called the \emph{size} of the kernel and may be viewed as a measure of the ``compressibility'' of a problem using polynomial-time pre-processing rules.
A kernel is called \emph{polynomial} if $g(q)$ is a polynomial function in $q$.
A breakthrough result of Bodlaender et al.~\cite{distillation} gave the first framework for proving that some parameterized problems do not admit polynomial kernels, by establishing so-called \emph{composition algorithms}.
Together with a result of Fortnow and Santhanam~\cite{fortnow_santh}, this allows to exclude polynomial kernels under the assumption that $\NP \nsubseteq \coNP/\poly$, otherwise implying	a collapse of the polynomial hierarchy to its third level~\cite{uniform_non_uniform}; see~\cite{book_kernels} for a recent book on kernelization.

We use standard graph theory notation and nomenclature for our parameters, following classical textbooks in the areas~\cite{murty,cygan_parameterized}.
Define $[k] = \{1,\dots, k\}$.
A \emph{$k$-coloring} $\varphi$ of a graph $G$ is a function $\varphi: V(G) \mapsto~[k]$.
Alternatively, a $k$-coloring is a $k$-partition $V(G) \sim \{\varphi_1, \dots, \varphi_k\}$ such that $\varphi_i = \{u \in V(G) \mid \varphi(u) = i\}$.
Unless stated, all colorings are proper.
If $\mathcal{V}$ is a partition of $V(G)$ into $p$ parts and $S \subseteq [p]$, we say that $X \subseteq V(G)$ is \emph{$S$-selective} if, for every $i \in S$, $|X \cap \mathcal{V}_i| = 1$ and $X$ has no more vertices; we say that $X$ \textit{hits} $\mathcal{V}_i$ if $X \cap \mathcal{V}_i \neq \emptyset$.
A graph is a \textit{cluster graph} if each of its connected components is a clique; the \textit{distance to cluster} of a graph $G$, denoted by $\dc(G)$, is the size of the smallest set $U \subseteq V(G)$ such that $G - U$ is a cluster (co-cluster) graph.
Using the terminology of~\cite{cai_split}, a set $U \subseteq V(G)$ is an $\mathcal{F}$-\textit{modulator} of $G$ if the graph $G - U$ belongs to the graph class $\mathcal{F}$.
When the context is clear, we omit the qualifier $\mathcal{F}$.
For cluster graphs, one can decide if $G$ admits a modulator of size $k$ in time \FPT\ on $k$~\cite{clusterFPT}.
    \section{Selective Coloring parameterized by distance to cluster}
\label{sec:dc}

Our first goal is to prove that \pname{Selective Coloring} can be solved in \FPT\ time when parameterized by the distance to cluster of the input graph.
Throughout this section, we denote the modulator by $U$, the connected components of $G - U$ by $\mathcal{C} = \{C_1, \dots, C_r\}$, and by $\mathcal{V}(X)$ the parts of $\mathcal{V}$ that contain some vertex of $X \subseteq V(G)$.

The initial step of our algorithm is to first guess which of the $2^{|U|}$ subsets of $U$ shall be present in the solution and, afterwards, guess one of the $|U|^{|U|}$ possible colorings of this subset; the final step is to show how one can determine if these guesses can be extended to account for the vertices of $G - U$.
As such, suppose we are given $U$, $\mathcal{C}$, a subset $X \subseteq U$ that contains at most one vertex of each part of the $p$ parts of $\mathcal{V}$, and a coloring $\varphi'$ of $X$.
We build an auxiliary graph $H$ as follows:
$V(H) = \{s, t\} \cup A \cup W \cup (V(G) \setminus U) \cup P$, where $A = \{a_1, \dots, a_k\}$ represents the colors we may assign to vertices, $W = \{w_{ij} \mid i \in [k], j \in [r]\}$ whose role is to maintain the property of the coloring, $s$ is the source of the flow, $t$ is the sink of the flow, $V(G) = \{v_1, \dots, v_n\}$ are the vertices of $G$, and $P = \{\rho_1, \dots, \rho_p\}$ control that only one vertex may be picked per part of $\mathcal{V}$.
For the arcs, we have $E(H) = S \cup F \cup R \cup L \cup T$, where $S = \{(s, a_i) \mid i \in [k]\}$, $F = \{(a_i, w_{ij}) \mid i \in [k], j \in [r]\}$, $R = \{(w_{ij}, v_\ell) \mid v_\ell \in C_j, N(v_\ell) \cap \varphi'_i = \emptyset\}$, $L = \{(v_\ell, \rho_j) \mid v_\ell \in V_j\}$, and $T = \{(\rho_j, t) \mid V_j \cap X = \emptyset\}$.
As to the capacity of the arcs, we define $c : E(H) \rightarrow \mathbb{N}$, with $c(e \in S) = p$, and $c(e \in F \cup R \cup L \cup T) = 1$.
Semantically, the vertices of $A$ correspond to the $k$ colors, while each $w_{ij}$ ensures that cluster $C_j$ has at most one vertex of color $i$.
Regarding the arcs, $R$ encodes the adjacency between vertices of the clusters and colored vertices in $X$, and $L$ encodes which parts have already been hit by $X$.
Note that the arcs in $R$ and $L$ are the only ones affected by the pre-coloring $\varphi'$.
An example of the constructed graph can be found in Figure~\ref{fig:cluster_ex}.


\begin{figure}[!htb]
    \centering
        \begin{tikzpicture}[scale=0.7]
            \GraphInit[unit=3,vstyle=Normal]
            \SetVertexNormal[Shape=circle, FillColor=white, MinSize=3pt]
            \tikzset{VertexStyle/.append style = {inner sep = \inners, outer sep = \outers}}
            \SetVertexLabelOut
            \begin{scope}
                \begin{scope}
                    \Vertex[x=0, y=0, Lpos=180,Math]{v_1}
                    \Vertex[x=2, y=0, Lpos=0, Math]{v_2}
                    \Vertex[x=1, y=0, Lpos=270, Math]{v_7}
                    \Vertex[x=0, y=2, Lpos=90,Math]{v_3}
                    \Vertex[x=2, y=2, Lpos=90,Math]{v_4}
                    \Vertex[x=0, y=-2, Lpos=-90,Math]{v_5}
                    \Vertex[x=2, y=-2, Lpos=-90,Math]{v_6}
                    \Edges(v_1,v_7,v_2,v_4,v_3,v_1,v_5,v_6,v_2)
                \end{scope}
                \draw (-1, -0.7) rectangle (3, 0.5);
                \Edge[style={bend left}](v_1)(v_2)
                \begin{scope}
                    \tikzset{VertexStyle/.append style = {shape = diamond, inner sep = 2pt}}
                    \AddVertexColor{black}{v_3}
                    \AddVertexColor{white}{v_2, v_4}
                \end{scope}
                \begin{scope}
                    \tikzset{VertexStyle/.append style = {shape = rectangle, inner sep = 3pt}}
                    \AddVertexColor{black}{v_6}
                    \AddVertexColor{white}{v_1,v_5, v_7}
                \end{scope}
            \end{scope}
        \end{tikzpicture}
        \hfill
        \begin{tikzpicture}[scale=0.7]
            \GraphInit[unit=3,vstyle=Normal]
            \SetVertexNormal[Shape=circle, FillColor=white, MinSize=3pt]
            \tikzset{VertexStyle/.append style = {inner sep = \inners, outer sep = \outers}}
            \SetVertexLabelOut
            \begin{scope}
                \Vertex[x=0,y=0,Lpos=180, Math]{s}
                \Vertex[x=1, y=1, Lpos=90, Math]{a_1}
                \Vertex[x=1, y=-1, Lpos=-90, Math]{a_2}
                \Vertex[x=9, y=0, Lpos=0, Ldist=1pt, Math]{t}
                
                \foreach \i in {1,2} {
                    \Edge[style={->, dashed}](s)(a_\i)
                    \foreach \j in {1,2} {
                        \pgfmathsetmacro{\y}{-3*(\i-1) + 1.5 - (1*(\j-1) - 0.5)}
                        \pgfmathtruncatemacro{\id}{2*(\i-1) + \j-1 + 3}
                        \pgfmathtruncatemacro{\lpos}{180*(\i-1) + 90}
                        \Vertex[x=3, y=\y, Lpos=\lpos, Math, L={w_{\i\j}}]{w_\i\j}
                        \Vertex[x=5, y=\y, Lpos=\lpos, Math, L ={v_\id}]{cv_\id}
                        \Edge[style={->}](a_\i)(w_\i\j)
                    }
                }

                \begin{scope}
                    \tikzset{VertexStyle/.append style = {shape = rectangle, inner sep = 2.2pt}}
                    \Vertex[x=7, y=1.5, Lpos=90, Math, L={\rho_1}]{r1}
                    \Vertex[x=7, y=-1.5, Lpos=270, Math, L={\rho_4}]{r4}
                    \AddVertexColor{black}{r4}
                \end{scope}
                \begin{scope}
                    \tikzset{VertexStyle/.append style = {shape = diamond, inner sep = 2.2pt}}
                    \Vertex[x=7, y=0.5, Lpos=90, Math, L={\rho_2}, Ldist=-1pt]{r2}
                    \Vertex[x=7, y=-0.5, Lpos=270, Math, L={\rho_3}, Ldist=-1pt]{r3}
                    \AddVertexColor{black}{r2}
                \end{scope}
                
                \Edge[style={->, bend left=20}](w_21)(cv_3)
                \Edge[style={->}](w_11)(cv_4)
                \Edge[style={->}](w_22)(cv_5)
                \Edge[style={->, bend right=20}](w_12)(cv_6)
                
                \Edge[style={->}](cv_3)(r2)
                \Edge[style={->}](cv_4)(r3)
                \Edge[style={->, bend left=20}](cv_5)(r1)
                \Edge[style={->}](cv_6)(r4)
                
                \foreach \i in {2,4} {
                    \Edge[style={->}](r\i)(t)
                }

            \end{scope}
    \end{tikzpicture}
\caption{{(left)} The input graph with $U = \{v_1, v_2, v_7\}$ and parts defined by vertices of different shapes and colors; {(right)} Auxiliary graph constructed from the pre-coloring of $X = \{v_1, v_2\}$ that has $\varphi'(v_i) = i$. Solid arcs have unit capacity; dashed arcs have capacity equal to $|\mathcal{V}|$.\label{fig:cluster_ex}}
\end{figure}
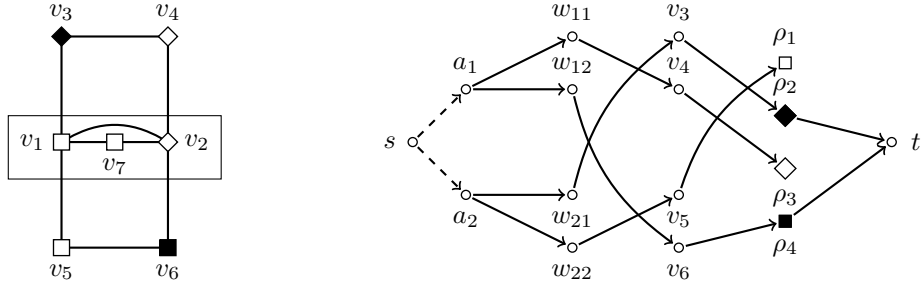

Now, let $f : E(H) \rightarrow \mathbb{N}$ be the function corresponding to the max-flow from $s$ to $t$ obtained using any of the algorithms available in the literature~\cite{maxflow}.
Our first observation, as given by the following lemma, is that, if no $(s,t)$-flow saturates the outbound arcs of $s$, then we cannot extend $\varphi'$.

\begin{lemma}\label{lem:partial_flow}
    If there is some $e \in T$ with $f(e) = 0$, then $G$ has no induced subgraph that hits every part of $\mathcal{V}$ once, contains $X$, and admits a $k$-coloring that extends $\varphi'$.
\end{lemma}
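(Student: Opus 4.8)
The plan is to prove the contrapositive: assuming a valid extension exists, I construct an $(s,t)$-flow of value $|T|$ in $H$, which forces the maximum flow $f$ to saturate every arc of $T$ and thus contradicts $f(e)=0$. Concretely, suppose $G$ has an induced subgraph $D$ with $D\cap U=X$ (recall the guess fixes $X$ as exactly the modulator vertices used, so the parts disjoint from $X$ are hit within $G-U$) that hits every part of $\mathcal V$ exactly once and admits a proper $k$-coloring $\varphi$ extending $\varphi'$. The first step is to record the key structural observation: for each part $\mathcal V_j$ with $\mathcal V_j\cap X=\emptyset$ --- that is, exactly the parts for which $(\rho_j,t)\in T$ --- the unique vertex $d_j\in D\cap\mathcal V_j$ lies in $V(G)\setminus U$, since $D$ meets $U$ only in $X$; hence $d_j$ sits in some cluster $C_{j'}$ and receives a color $i:=\varphi(d_j)\in[k]$.

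Next I would push one unit of flow along the path $s\to a_i\to w_{ij'}\to d_j\to\rho_j\to t$ for every such part, and take $g$ to be the sum of these unit path-flows. Flow conservation then holds automatically, so the only work is feasibility, and the two nontrivial points are precisely the arcs that encode $\varphi'$ and the cluster structure. The arc $(w_{ij'},d_j)$ genuinely belongs to $R$: since $\varphi$ is proper and $\varphi(d_j)=i$, no neighbour of $d_j$ is colored $i$, so in particular $N(d_j)\cap\varphi'_i=\emptyset$. And no arc $(a_i,w_{ij'})\in F$, which has capacity one, is overloaded: two parts routing flow through the same $(a_i,w_{ij'})$ would correspond to two distinct vertices of the clique $C_{j'}$ both colored $i$, which is impossible for a proper coloring. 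The remaining capacities are routine bookkeeping --- distinct parts use distinct unit-capacity arcs of $L$ and $T$, each used at most once, while each arc of $S$ has capacity $p$, at least the number of parts routed through it.

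Finally I would close the argument: $g$ is a feasible $(s,t)$-flow of value $|T|$ (one unit per arc of $T$). Because $T$ is exactly the set of arcs entering $t$ and each has unit capacity, $T$ is an $s$-$t$ cut of capacity $|T|$, so the maximum flow value is at most $|T|$ and, together with $g$, equals $|T|$; hence every maximum flow --- in particular $f$ --- saturates all of $T$, giving $f(e)=1$ for each $e\in T$, the desired contradiction. I expect the only real obstacle to be the feasibility check for the arcs of $F$: this is the single place where the hypothesis that each component of $G-U$ is a clique is used, translating the unit capacity of $(a_i,w_{ij'})$ into the statement that a cluster uses each color at most once, which is exactly the properness of $\varphi$ restricted to a clique.
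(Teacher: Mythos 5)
Your proposal is correct and takes essentially the same route as the paper: a proof by contraposition that pushes one unit of flow along $\langle s, a_i, w_{ij'}, d_j, \rho_j, t\rangle$ for each part not hit by $X$, with feasibility hinging on the fact that a proper coloring puts at most one vertex of each color in the clique $C_{j'}$, so no arc of $F$ is overloaded. You are in fact somewhat more careful than the paper, which leaves implicit the existence of the $R$-arcs, the max-flow/min-cut step transferring saturation of $T$ from the constructed flow $g$ to the maximum flow $f$, and the reading (which the paper's own argument also tacitly requires) that the solution meets the modulator $U$ in exactly $X$.
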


\begin{proof}
    By contraposition, suppose that there is a coloring $\varphi$ satisfying the conditions of the statement, and let $G^*$ be the induced subgraph of $G$ that contains $X$ and is $[p]$-selective.
    For each $v \in C_j \cap \varphi_i \cap \mathcal{V}_\ell$, we push one flow unit along the path $\angled{s, a_i, w_{ij}, v, \rho_\ell, t}$.
    This process does not exceed the capacities of the involved arcs because: (i) we use at most $p$ units of flow, so $f((s,a_i)) \leq p = c((s, a_i))$, (ii) $\varphi$ is a proper coloring of $G^*$, so at most one vertex of clique $C_j$ is in $\varphi_i$, implying $f((a_i, w_{ij})) \leq 1$, (iii) to each $v \in V(G^*) \cap C_j$ is assigned a single color $i$, so $f((w_{ij}, v)) \leq 1$, and (iv) $V(G^*)$ is $[p]$-selective, so there is a unique $v \in V(G^*) \cap \mathcal{V}_\ell$, implying $f((\rho_\ell, t)) = 1$, for every $(\rho_\ell, t) \in T$. 
\end{proof}

\begin{lemma}
    \label{lem:col_construction}
    The maximum ($s,t$)-flow given by $f$ is equal to $|\mathcal{V} \setminus \mathcal{V}(X)|$ if and only if there is an induced subgraph $G^*$ of $G$ that hits every part of $\mathcal{V}$ once, contains $X$, and admits a $k$-coloring that extends $\varphi'$.
\end{lemma}

 \begin{proof}
    For the forward direction, we construct $G^*$ starting with $X$ and then picking, for each $\mathcal{V}_\ell \in \mathcal{V} \setminus \mathcal{V}(X)$, the unique vertex $v \in V(H) \cap V(G)$ that has $f((v, \rho_\ell)) = 1$; such a vertex must exist for each $\mathcal{V}_\ell \notin \mathcal{V}(X)$ since the maximum flow is equal to $|\mathcal{V} \setminus \mathcal{V}(X)|$ and through each $\rho_\ell$ passes a different flow unit.
    As to the coloring $\varphi$, we set $\varphi(u \in X) = \varphi'(u)$ and, for each $j \in [r]$ and $v \in V(G^*) \cap C_j$, set $\varphi(v) = i$ if and only if $f((w_{ij}, v)) = 1$.
    This coloring is proper since $\varphi'$ is proper and for every arc $(a_i, w_{ij}) \in E(H)$, we have that $f((a_i, w_{ij})) \leq 1$, so no two vertices of a clique have the same color, and arc $(w_{ij}, v)$ is in $E(H)$ if and only if $v$ has no neighbor in $X$ colored with $i$.
    As such, we have that $V(G^*)$ is $[p]$-selective and is $k$-colorable.
    For the converse, it suffices to apply the contraposition of Lemma~\ref{lem:partial_flow}.
 \end{proof}

At this point we are essentially done.
Lemmas~\ref{lem:partial_flow} and~\ref{lem:col_construction} guarantee that, if the max-flow algorithm fails to yield a large enough flow, the fixed pre-coloring $\varphi'$ of $X \subseteq U$ cannot be extended; moreover, the latter also implies that, if an extension is possible, max-flow correctly finds it.

\begin{theorem}
    \label{thm:dc_fpt}
    \pname{Selective Coloring} parameterized by distance to cluster can be solved in \FPT\ time.
\end{theorem}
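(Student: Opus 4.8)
The plan is to assemble the machinery developed above into a branch-and-flow algorithm. First I would compute a minimum-size cluster modulator $U$ of $G$: since deciding whether a cluster modulator of size at most $\ell$ exists is \FPT\ in $\ell$~\cite{clusterFPT}, iterating over $\ell = 0, 1, 2, \dots$ (or invoking the optimization variant directly) produces such a $U$ with $|U| = \dc(G)$ in time $g(\dc(G)) \cdot \mathrm{poly}(n)$, together with the clique components $\mathcal{C} = \{C_1, \dots, C_r\}$ of $G - U$.

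Next I would enumerate every pair $(X, \varphi')$ in which $X \subseteq U$ meets each part of $\mathcal{V}$ at most once and $\varphi'$ is a proper coloring of $G[X]$. For each such pair I build the auxiliary graph $H$ and compute a maximum $(s,t)$-flow $f$ with a standard algorithm~\cite{maxflow}. The procedure declares the instance positive as soon as some pair makes this flow equal to $|\mathcal{V} \setminus \mathcal{V}(X)|$, and negative if no enumerated pair does.

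Correctness follows directly from Lemma~\ref{lem:col_construction}. If some enumerated pair attains the value $|\mathcal{V} \setminus \mathcal{V}(X)|$, the lemma exhibits a $[p]$-selective, $k$-colorable induced subgraph, so the instance is indeed positive. Conversely, given any solution $G^*$ with a proper $k$-coloring $\varphi$, set $X = V(G^*) \cap U$; then $X$ is a legal partial selection and $\varphi|_X$ is a proper coloring of $G[X]$, so the pair $(X, \varphi|_X)$ is examined and, again by Lemma~\ref{lem:col_construction}, its maximum flow equals $|\mathcal{V} \setminus \mathcal{V}(X)|$ -- note the flow can never exceed this value, since $|\mathcal{V} \setminus \mathcal{V}(X)|$ is exactly the total capacity of the arcs of $T$ entering $t$.

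For the running time, the flow network has $\mathcal{O}(kr + n + p)$ vertices and a polynomial number of arcs, so each max-flow computation takes polynomial time. The only quantity that must be bounded purely in terms of the parameter is the number of enumerated pairs: there are at most $2^{|U|}$ choices of $X$. The main obstacle is that the number of proper colorings of $G[X]$ appears to depend on $k$, which is not part of the parameter; here I would exploit the interchangeability of colors. Since $|X| \leq |U|$, every proper coloring of $G[X]$ uses at most $|U|$ distinct colors, and any valid solution coloring can be composed with a permutation of $[k]$ so that its restriction to $X$ draws colors only from $[\min(|U|, k)]$. It therefore suffices to enumerate the at most $|U|^{|U|}$ colorings of $G[X]$ over this bounded palette, which together with Lemma~\ref{lem:col_construction} preserves correctness. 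This gives at most $2^{\dc(G)} \cdot \dc(G)^{\dc(G)}$ pairs and an overall running time of $h(\dc(G)) \cdot \mathrm{poly}(n)$ for a suitable computable $h$, as required.
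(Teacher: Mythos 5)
Your proposal is correct and takes essentially the same approach as the paper: enumerate pairs $(X,\varphi')$ over the modulator, build the auxiliary flow network, and decide via Lemma~\ref{lem:col_construction}. Your explicit color-interchangeability argument (restricting to a palette of size $\min(|U|,k)$) is the same symmetry-breaking the paper uses implicitly when bounding the number of colorings by $|U|^{|U|}$ and by the Bell number $B_{|U|}$, and your explicit computation of the modulator via~\cite{clusterFPT} only makes the argument slightly more self-contained.
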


\begin{proof}
    For each $X \subseteq U$ and each a coloring $\varphi'$ of $X$, we construct $H$ as before and execute a max-flow algorithm.
    By Lemma~\ref{lem:col_construction}, the max flow of $H$ is equal to $|\mathcal{V} \setminus \mathcal{V}(X)|$ if and only if there is an induced subgraph $G^*$ of $G$ that hits every part of $\mathcal{V}$ exactly once, contains $X$, and admits a $k$-coloring that extends $\varphi'$.
    Guess $X$, guess $\varphi'$, summon construction, profit.
    If no choice of $(X, \varphi')$ admits a valid extension, $(G, \mathcal{V}, k)$ is a negative instance of \pname{Selective Coloring}.
    Otherwise, there is some $(X, \varphi')$ that can be extended and, again by Lemma~\ref{lem:col_construction}, we can find the corresponding coloring in polynomial time.
    As to the complexity, there are at most $2^{|U|}$ choices of $X$ and, for each $X$, $B_{|X|}$ possible colorings of $X$, where $B_n$ is the $n$-th Bell number; consequently, our algorithm runs in $\bigO{2^{|U|}B_{|U|}\poly(n)}$ time if $|V(G)| = n$.
\end{proof}
    
%
    \section{Treewidth and number of parts}
\label{sec:tw}

As shown in~\cite{selective_complexity}, for $k = 1$, \pname{Selective $k$-Coloring} is \NPH\ on disjoint union of $P_3$'s, implying that the problem is \para\NPH\ when jointly parameterized by treedepth and distance to disjoint paths.
Their proof employed a reduction from \pname{3-SAT} and the size of $\mathcal{V}$ was linear in the number of variables of the formula.
We show that this reduction cannot be strengthened to also bound the number of parts by presenting an \FPT\ algorithm when parameterizing by the treewidth $t$ and number of parts $p$.
Before proceeding to the algorithm, note that we may assume that $k \leq t$, otherwise the input graph $G$ itself is $k$-colorable, so all of its induced subgraphs will be $k$-colorable, and $(G, \mathcal{V}, k)$ would be a trivially positive instance of \pname{Selective Coloring}.

A \textit{tree decomposition} of a graph $G$ is a pair $\td{T} = \left(T, \mathcal{B} = \{B_j \mid j \in V(T)\}\right)$, where $T$ is a tree and $\mathcal{B} \subseteq 2^{V(G)}$ is a family where: $\bigcup_{B_j \in \mathcal{B}} B_j = V(G)$;
for every edge $uv \in E(G)$ there is some~$B_j$ such that $\{u,v\} \subseteq B_j$;
for every $i,j,q \in V(T)$, if $q$ is in the path between $i$ and $j$ in $T$, then $B_i \cap B_j \subseteq B_q$.
Each $B_j \in \mathcal{B}$ is called a \emph{bag} of the tree decomposition.
$G$ has treewidth at most $t$ if it admits a tree decomposition such that no bag has more than $t+1$ vertices.
For more on treewidth, we refer to~\cite{treewidth}.
After rooting $T$, $G_x$ denotes the subgraph of $G$ induced by the vertices contained in any bag that belongs to the subtree of $T$ rooted at bag $x$.
An algorithmically useful property of tree decompositions is the existence of a \emph{nice tree decomposition} of minimum width and $\bigO{t|V(G)|}$ nodes.

\begin{definition}[Nice tree decomposition]
    A tree decomposition $\td{T}$ of $G$ is said to be \emph{nice} if its tree is rooted at, say, the empty bag $r(T)$ and each of its bags is from one of the following four types:
    \begin{enumerate}
        \item \emph{Leaf node}: a leaf $x$ of $\td{T}$ with $B_x = \emptyset$.
        \item \emph{Introduce vertex node}: an inner bag $x$ of $\td{T}$ with one child $y$ such that $B_x \setminus B_y = \{u\}$.
        \item \emph{Forget node}: an inner bag $x$ of $\td{T}$ with one child $y$ such that $B_y \setminus B_x = \{u\}$.
        \item \emph{Join node}: an inner bag $x$ of $\td{T}$ with two children $y,z$ such that $B_x = B_y = B_z$.
    \end{enumerate}
\end{definition}

\begin{theorem}
    \label{thm:tw_fpt}
    \pname{Selective Coloring} parameterized by treewidth $t$ and number of parts $p$ is in \FPT.
\end{theorem}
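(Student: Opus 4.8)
The plan is to design a dynamic programming algorithm over a nice tree decomposition $\td{T} = (T, \mathcal{B})$ of $G$ of width $t$, of the kind guaranteed to exist with $\bigO{t|V(G)|}$ nodes. Recall that we seek a $[p]$-selective set $W \subseteq V(G)$ such that $G[W]$ admits a proper $k$-coloring, and that we may assume $k \leq t$. At a bag $x$ I would store a Boolean table indexed by two pieces of information: a subset $S \subseteq [p]$ recording which parts of $\mathcal{V}$ have already been hit by the partial solution inside $G_x$, and, for each vertex $u \in B_x$, a \emph{local state} from the set $\{\bot\} \cup [k]$, where $\bot$ means ``$u$ is not selected'' and a color $c \in [k]$ means ``$u$ is selected and colored $c$''. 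A table entry is marked feasible if there is a set $W_x \subseteq V(G_x)$ whose intersection with $B_x$ is exactly the set of vertices assigned a color, such that $G[W_x]$ is a properly $k$-colored induced subgraph consistent with the recorded colors on $B_x$, and whose selected vertices hit precisely the parts in $S$, each exactly once.

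The transitions follow the usual four node types. For a \textbf{leaf} node the only feasible entry is $S = \emptyset$ with the empty coloring. At an \textbf{introduce} node adding $u$ to $B_y$, I extend each feasible entry of $y$ in two ways: leaving $u$ unselected (keeping $S$ and the coloring), or selecting $u$ with a color $c$; in the latter case I must check that $c$ differs from the colors of all already-selected neighbors of $u$ within $B_y$ (all relevant neighbors are present, since $u$ has no neighbors in $V(G_y)\setminus B_y$), and that the part $\mathcal{V}_\ell$ containing $u$ is not already in $S$, in which case the new entry has hit set $S \cup \{\ell\}$. At a \textbf{forget} node removing $u$, each entry on $B_x = B_y \setminus \{u\}$ is feasible if some entry of $y$ agreeing with it on $B_x$ and on $S$, and using any local state for $u$, is feasible, so we take the logical OR over the states of $u$. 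The root bag is empty, and $(G,\mathcal{V},k)$ is positive if and only if the root carries a feasible entry with $S = [p]$.

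The crux, and the step I expect to be the main obstacle, is the \textbf{join} node with children $y, z$ and $B_x = B_y = B_z$. Here the two subtrees must agree on the local state of every bag vertex, so I only combine entries sharing the same coloring of $B_x$; let $S_B \subseteq [p]$ denote the set of parts hit by the selected bag vertices under that common coloring, so that every feasible entry with this coloring has $S \supseteq S_B$. The difficulty is that the parts hit inside the bag are counted in both children, while a part hit outside the bag in \emph{both} subtrees would correspond to two distinct selected vertices lying in the same part, violating $[p]$-selectivity. The correct rule therefore pairs an entry of $y$ with hit set $S_y$ and an entry of $z$ with hit set $S_z$ \emph{only when} $S_y \cap S_z = S_B$, producing a feasible entry with hit set $S_y \cup S_z$; intuitively this distributes the parts outside $S_B$ among those resolved by $y$, those resolved by $z$, and those still unhit. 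No additional color check is needed at the join, since there are no edges between $V(G_y)\setminus B_x$ and $V(G_z)\setminus B_x$.

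Correctness follows by a standard induction on $\td{T}$ showing that each feasible entry corresponds to a valid partial selective coloring of $G_x$ and conversely. For the running time, each bag stores at most $2^p (k+1)^{t+1}$ entries, and since $k \leq t$ this is bounded by $2^p (t+1)^{t+1}$; the join transition, the most expensive one, is carried out by iterating, for each common bag coloring, over the ways of distributing the parts outside $S_B$ among $y$, $z$, or neither, at cost $3^p$ per coloring. Multiplying by the $\bigO{t|V(G)|}$ nodes of the decomposition yields a total running time of the form $f(t,p)\cdot \poly(|V(G)|)$, establishing membership in \FPT.
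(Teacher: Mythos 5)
Your proposal is correct and follows essentially the same approach as the paper: a dynamic program over a nice tree decomposition whose table is indexed by a subset of $[p]$ and a (partial) coloring of the bag, with the join node handled by splitting the hit parts between the two children. The only difference is bookkeeping --- the paper's set $S$ records only parts hit by vertices already forgotten (so the join simply partitions $S$ into $R$ and $S \setminus R$), whereas your $S$ also counts parts hit inside the bag (so the join requires $S_y \cap S_z = S_B$); both conventions are sound and yield the same \FPT\ conclusion.
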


\begin{proof}
    We present a dynamic programming algorithm over a nice tree decomposition $\td{T} = (T, \mathcal{B})$ of $G$, which we assume to be given in the input; w.l.o.g. we assume that $T$ is rooted at a forget node.
    For each node $x$ of $T$, we have a table $f_x$ indexed by $S \in 2^{[p]}$ and a coloring $\varphi$ of the vertices in $B_x$.
    We want to show that $f_x(S, \varphi) = 1$ if and only if there is a $k$-colorable induced subgraph of $G_x$ where, for each $j \in S$, there is some vertex in $V(G_x) \cap \mathcal{V}_j \setminus B_x$, and for every $B_x$ can be colored according to $\varphi$.
    For the next cases, let $x$ be the node we are interested in solving; also, let us denote by $\varphi^{\uparrow v,i}$ the coloring obtained by extending $\varphi$ to include $v$ on color class $i$, and $\varphi^{\downarrow v}$ the restriction of $\varphi$ obtained by discoloring $v$.
    To simplify our transition functions, whenever there is some vertex $u$ colored by $\varphi$ but $\mathcal{V}(u) \in S$, we say that $f_x(S, \varphi) = 0$, since this would imply that two distinct vertices of the same part were picked by the solution;
    we also set $f_x(S, \varphi) = 0$ if some color class of $\varphi$ does not induce an independent set of $B_x$.
    Note that these checks can be performed in $\bigO{n^2}$ time.
    
    \emph{Leaf node.} Since $B_x = \emptyset$ and $x$ has no children in $T$, we set $f_x(\emptyset, \emptyset) = 1$.
    
    \emph{Introduce node.} Let $y \in V(T)$ be the child of $x$ and $v \in B_x \setminus B_y$.
    We compute $f_x(S, \varphi)$ according as follows:
    
    \begin{equation}
        \label{eq:tw_intro}
        f_x(S, \varphi) = 
        \begin{cases}
            f_y(S, \varphi), &\text{ if $\varphi^{\downarrow v} = \varphi$;}\\
            f_y(S, \varphi^{\downarrow v}), &\text{ otherwise.}\\
        \end{cases}
    \end{equation}
    
    In the first case of Equation~\ref{eq:tw_intro}, we are dealing with the situation where $v$ was not colored by $\varphi$, so any solution to $G_y$ is valid for $G_x$ under the same constraints imposed by $S$ and $\varphi$.
    Otherwise, $v$ was colored by $\varphi$ and, since $\mathcal{V}(v) \notin S$, any partial solution of $G_y$ that has not picked a vertex of the same part as $v$ and has no neighbor of $v$ colored with $\varphi(v)$ can be directly extended include $v$, under the constraints given by $S$.
    
    \emph{Forget node.} Again, let $y \in V(T)$ be the child of $x$ but $v \in B_y \setminus B_x$.
    We compute $f_x$ as follows:
    
    \begin{equation}
        \label{eq:tw_forget}
        f_x(S, \varphi) = 
        \begin{cases}
            f_y(S, \varphi), &\text{ if $\mathcal{V}(v) \notin S$;}\\
            \max\left\{f_y(S, \varphi), \max\limits_{i \in [k]}\left\{f_y(S \setminus \mathcal{V}(v), \varphi^{\uparrow v, i})\right\}\right\}, &\text{ otherwise.}\\
        \end{cases}
    \end{equation}
    
    For the first case, since $\mathcal{V}(v) \notin S$, $v$ cannot be in any solution of $G_x$ and, consequently, every solution to $G_x$ under $S$ and $\varphi$ must also be a solution to $G_y$ under the same $S$ and $\varphi$.
    For the second case, we know that some vertex of $\mathcal{V}(v)$ has been selected, which could be some vertex different from $v$, i.e. $v$ was not colored in $G_y$, or it was $v$, but in this case we must check, for each possible coloring of $v$, if it is possible to color find a solution to $G_y$ that uses no other vertex of $\mathcal{V}(v)$ and $B_y$ is colored according to an extension of $\varphi$ that includes $v$, i.e. there must be a solution represented by $f_y(S \setminus \mathcal{V}(v), \varphi^{\uparrow v, i})$, for some $i \in [k]$.
    
    \emph{Join node.} Let $y,z \in V(T)$ be the children of $x$.
    We transition according to Equation~\ref{eq:tw_join}.
    
    \begin{equation}
        \label{eq:tw_join}
        f_x(S, \varphi) = \max\limits_{R \subseteq S}\left\{f_y(R, \varphi)f_x(S \setminus R, \varphi)\right\}
    \end{equation}
    
    If $G_x$ is a solution respecting $S$ and $\varphi$, there are vertices in $G_y \setminus B_x$ that hit some of the parts required by $S$, while all other parts must be hit by vertices of $G_z \setminus B_x$; since $B_x$ separates $G_y \setminus B_x$ and $G_z \setminus B_x$, no part may be hit by partial solutions to both $G_y \setminus B_x$ and $G_z \setminus B_x$.
    Since, \textit{a priori}, we do not know how these hits are spread across $G_y \setminus B_x$ and $G_z \setminus B_x$, we must test, for each $R \subseteq S$, if there is a solution to $G_y$ restricted to $R$ and $\varphi$ and a solution to $G_z$ restricted to $S \setminus R$ and $\varphi$, which is what is computed by Equation~\ref{eq:tw_join}.
    
    Since $T$ is rooted at a forget node $r$, we may read the solution to the problem by checking if $f_r([p], \emptyset) = 1$, i.e. there is a solution to $G_r = G$ that touches all $p$ parts of $\mathcal{V}$.
    The correctness of the algorithm follows from our previous arguments.
    As to the running time, for each $x \in V(T)$, we have $\bigO{2^pB_t}$ entries to $f_x$, where $B_n$ is the $n$-th Bell number.
    For each entry, we need $\bigO{1}$ time for leaf and introduce nodes, $\bigO{t}$ time for forget nodes, and $\bigO{2^p}$ time for join nodes, totalling a running time of the order of $\bigO{4^pB_tn^3}$ time.
\end{proof}
    \section{Cotreewidth and number of colors}

Cotreewidth, the treewidth of the complementary graph, has recently begun to attract the attention of the community.
It has been shown to yield \FPT\ algorithms for problems which are \WH\ for either treewidth or cliquewidth, such as \pname{Equitable Coloring}~\cite{equitable_dmtcs} and dense subgraph detection~\cite{dense_cotreewidth}.
This is not the case, however, for \pname{Selective Coloring}: Demange et al.~\cite{selective_complexity} have shown that the problem is \NPH\ for complete $q$-partite graphs even when the size of each independent set is at most three, i.e. the complementary graph is the disjoint union of triangles, which has treewidth equal to two, and implies that \pname{Selective Coloring} is \para\NPH\ when parameterized by cotreewidth.
While we have shown in Section~\ref{sec:tw} that the problem is fixed parameter tractable under treewidth and number of parts, here we show that we can also attain tractability when parameterizing by cotreewidth and \textit{number of colors}.
As done with \pname{Equitable Coloring}~\cite{equitable_dmtcs}, we deal with the complementary problem parameterized by treewidth and number of colors, i.e. \pname{Selective Clique Partition}, which, given $G$, $\mathcal{V}$, and $k$, asks for a partition into cliques of size $k$ that uses one vertex of each part of $\mathcal{V}$.
We first show that the intuitively more powerful parameterization treewidth and number of parts is actually equivalent to treewidth and number of colors.

\begin{observation}
    \label{obs:sel_clique}
    If $p > kt$, then $(G, \mathcal{V})$ is a negative instance of \pname{Selective Clique Partition}.
\end{observation}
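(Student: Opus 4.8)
The plan is to prove the contrapositive: I will show that every positive instance of \pname{Selective Clique Partition} has $p \le kt$, so that $p > kt$ certifies a negative instance. Fix a hypothetical solution. By definition it selects exactly one vertex from each of the $p$ parts of $\mathcal{V}$, producing a set $X \subseteq V(\overline{G})$ with $|X| = p$, and it partitions $X$ into $k$ cliques $Q_1, \dots, Q_k$ of $\overline{G}$ (these are precisely the $k$ color classes of the associated \pname{Selective Coloring} solution, read in the complement). Since the $Q_i$ partition $X$, we have $p = \sum_{i=1}^{k} |Q_i|$, so it suffices to bound the size of a single clique by $t$.

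The heart of the argument is that cliques are confined to bags. I would fix a tree decomposition $\td{T}$ of $\overline{G}$ of width $t$ and use the standard fact that every clique of $\overline{G}$ lies inside a single bag of $\td{T}$: if the vertices of $Q_i$ were spread over two bags with no common bag containing all of them, the intersection axiom along the path of $T$ joining those bags would separate two adjacent vertices of the clique, a contradiction. Hence each $Q_i$ is contained in one bag, which hosts at most $t$ vertices, giving $|Q_i| \le t$ and therefore $p = \sum_{i=1}^{k}|Q_i| \le kt$. Contrapositively, $p > kt$ rules out any selective clique partition, so $(G, \mathcal{V})$ is a negative instance.

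The step I expect to be delicate is the exact constant in the per-clique bound, since it is where the treewidth convention enters. Under the Definition above a width-$t$ decomposition has bags of size at most $t+1$, which literally yields $|Q_i| \le t+1$; the stated threshold $p > kt$ corresponds to measuring $t$ as the maximum bag size. For the purposes of this section only a bound of the form $p \le g(k,t)$ is needed, so either reading suffices to conclude that the number of parts is bounded by a function of the cotreewidth and the number of colors, which is the equivalence we are after. A convention-free derivation of the same bound runs through the selected subgraph $\overline{G}[X]$: being partitioned into $k$ cliques it has independence number at most $k$, and being of treewidth at most $t$ it is $(t+1)$-colorable, so each of its $t+1$ color classes---an independent set---has at most $k$ vertices, again giving $p = |X| \le k(t+1)$.
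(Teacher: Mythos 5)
Your argument is correct and its core is identical to the paper's: every clique of a graph of treewidth $t$ lies inside a single bag of the tree decomposition, so the $k$ cliques of a solution cover at most $k$ times the maximum bag size many vertices, while selectivity forces them to cover exactly $p$ vertices. The paper phrases this as a contradiction, you as a contrapositive, and you sketch a proof of the clique-in-a-bag fact that the paper merely asserts --- none of this is a substantive difference. What you add beyond the paper is the careful handling of the width convention, and here you have in fact caught a genuine off-by-one slip: the paper's proof claims $|C_i| \le t$, but under its own definition (width $t$ means bags of size at most $t+1$) the correct bound is $|C_i| \le t+1$, so the provable statement is that $p > k(t+1)$ implies a negative instance. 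As literally stated, the observation fails, e.g., for $G = K_{t+1}$, $k = 1$, and $\mathcal{V}$ the partition into singletons: this is a positive instance with $p = t+1 > kt$. You are also right that the discrepancy is harmless downstream, since only a bound of the form $p \le g(k,t)$ is needed to make the parameterizations ``treewidth plus number of parts'' and ``treewidth plus number of colors'' interchangeable. Finally, your closing alternative derivation --- the selected subgraph has independence number at most $k$ and, having treewidth at most $t$, is $(t+1)$-colorable, whence $p \le k(t+1)$ --- is a genuinely different, convention-robust route that avoids the clique-in-a-bag lemma entirely; either argument suffices for the purposes of this section.
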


\begin{proof}
    Suppose to the contrary that there is a partition into cliques of $G$ given by $\{C_1, \dots, C_k\}$.
    Since each $C_i$ is a clique, it must be entirely contained in some bag of $\td{T}$, i.e. we have that $|C_i| \leq t$ and, thus, $\sum_{i \in [k]} |C_i| \leq kt$.
    Since our solution hits each part of $\mathcal{V}$ exactly once, it also holds that $\sum_{i \in [k]} |C_i| = p$, a contradiction.
\end{proof}

\begin{theorem}
    \label{thm:cotw}
    \pname{Selective Clique Partition} parameterized by treewidth and number of parts is in \FPT.
\end{theorem}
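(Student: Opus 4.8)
The plan is to solve the problem by dynamic programming over a nice tree decomposition $\td{T} = (T, \mathcal{B})$ of $G$, closely following the structure of Theorem~\ref{thm:tw_fpt} but replacing proper colorings by clique partitions. First I would record that we may assume $k < p$: when $k \geq p$, assigning each of the $p$ selected vertices its own singleton clique already produces a partition into at most $k$ cliques, so the instance is trivially positive. Together with Observation~\ref{obs:sel_clique}, which guarantees $p \leq kt$ on positive instances, this is precisely what makes treewidth-and-number-of-parts interchangeable with treewidth-and-number-of-colors, and it lets me treat $k \leq p$ as bounded by the parameter throughout.

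For the table at a node $x$, I would index entries by a triple $(S, \pi, c)$: a set $S \subseteq [p]$ of the parts that are hit by a selected vertex already forgotten below $x$; a partition $\pi$ of the selected vertices of $B_x$ into blocks, each of which is required to be a clique of $G[B_x]$ and represents a partial clique still to be completed; and a counter $c \leq k$ of the cliques already fully completed inside $G_x$. The intended meaning of $f_x(S, \pi, c) = 1$ is that one can select one vertex from each part of $S$ among $V(G_x) \setminus B_x$, consistently with $\pi$, so that exactly $c$ cliques are closed below $B_x$. Leaf nodes are initialised at $(\emptyset, \emptyset, 0)$. At an introduce node we either leave the new vertex $v$ unselected, or place it into some block after checking that it is adjacent to every member of that block and that $\mathcal{V}(v)$ lies neither in $S$ nor in any other block. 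At a forget node, an unselected $v$ copies the table unchanged, while a selected $v$ moves $\mathcal{V}(v)$ into $S$, is removed from its block, and increments $c$ exactly when its block becomes empty, which signals a completed clique. The final answer is $\bigvee_{c \leq k} f_r([p], \emptyset, c)$.

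The crux, and the only point where clique partition behaves differently from coloring, is the join node. The decisive structural fact is that $B_x$ separates $V(G_y) \setminus B_x$ from $V(G_z) \setminus B_x$; hence a single clique can never own a forgotten vertex on the $y$-side and another on the $z$-side, as those two would be non-adjacent. A careless merge of the two children would let a block silently collect forgotten members from both subtrees and thereby certify a clique that does not exist. To prevent this I would enrich the state with one boolean per block, recording whether that clique has already absorbed a forgotten vertex inside the current subtree; this flag is switched on at the forget node that first shrinks a still-nonempty block. A join then keeps only pairs of children that agree on $\pi$, have disjoint $S$, and never flag the same block on both sides; it outputs $c = c_y + c_z$ and the disjunction of the two flag vectors. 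Throughout I would also impose $c + |\pi| \leq k$, since each active block is itself a distinct clique in the making.

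Correctness would follow by the usual induction over $\td{T}$ in both directions: a genuine selection with a clique partition of $G_x$ projects onto a valid entry, and conversely any entry equal to $1$ can be reassembled from witnesses of the children, where the separation property guarantees that every reconstructed block is a true clique. For the running time, the number of states at each node is at most $2^{p} \cdot B_{t+1} \cdot 2^{t+1} \cdot (p+1)$, with $B_{t+1}$ the Bell number counting partitions of a bag of size at most $t+1$; as each transition -- including the subset-style pairing at join nodes -- is polynomial in $n$, the whole algorithm runs in $f(t,p) \cdot \poly(n)$ time, as required.
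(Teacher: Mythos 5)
Your overall plan (dynamic programming over a nice tree decomposition with states consisting of hit parts, a partition of selected bag vertices into partial cliques, and a count of finished cliques) can be made to work, but as written the algorithm is unsound, and the flaw sits exactly at the point you call the crux. You correctly note that a block must never absorb forgotten vertices from both sides of a join, and you add flags to prevent that; however, the very same phenomenon occurs along a \emph{path} of the decomposition, at an introduce node lying above a forget node, and there your flags are never consulted. Your forget rule removes a selected $v$ from its block and lets the still-nonempty block survive (flag switched on); your introduce rule then lets a new vertex $u$ join that block after checking adjacency only to its \emph{current} members. Since $v$ has already left the block, nobody ever checks $uv \in E(G)$ --- and in fact $uv \notin E(G)$ is forced, because $u$ and $v$ never share a bag. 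Trace your rules on the path $v_1 v_2 v_3$ (edges $v_1v_2$, $v_2v_3$ only) with three singleton parts and $k=1$: introduce and select $v_1$; introduce $v_2$ and merge it into $v_1$'s block; forget $v_1$ (block shrinks to $\{v_2\}$, flag on); introduce $v_3$ and merge it into the block ($v_2v_3 \in E(G)$, flag not consulted); forget $v_2$, then forget $v_3$, at which point the block empties and the completed-clique counter becomes $1$. The table thus certifies a partition of $\{v_1,v_2,v_3\}$ into one clique although $v_1v_3 \notin E(G)$, so the DP answers YES on a NO-instance. Padding this with a disjoint $K_5$ whose vertices are singleton parts, and taking $k=2$, gives the same failure on an instance satisfying $p \le kt$, so screening instances via Observation~\ref{obs:sel_clique} does not rescue the argument; likewise, your appeal to ``the separation property guarantees that every reconstructed block is a true clique'' is exactly backwards here --- separation is what makes the reconstructed block \emph{fail} to be a clique.

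The repair is small but essential: a flagged block must also be closed to new members at introduce nodes, i.e.\ once a block has lost a member to a forget while still nonempty, it may only shrink until it empties. With that rule (together with your join rule), the invariant ``the bag members of a block, together with all its already-forgotten members, form a clique of $G$'' can be maintained inductively, and the rest of your correctness and running-time analysis goes through. It is worth contrasting this with the paper's proof, which avoids incremental blocks altogether: its state stores only a set $Q$ of bag vertices already covered and a count $\ell$ of finished cliques, and each clique is assembled \emph{atomically} at an introduce node as a subset $R \subseteq Q$ with $\clique(R)=1$. This is legitimate because in a graph of treewidth $t$ every clique lies entirely inside some bag, and it makes soundness immediate --- precisely the discipline (``no growth after a forget'') that is missing from your write-up.
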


\begin{proof}
    Much like in Theorem~\ref{thm:tw_fpt}, our approach is to show a dynamic programming algorithm over the nice tree decomposition $\td{T} = (T, \mathcal{B})$, rooted at a forget node.
    For each node $x \in V(T)$, our table $f_x$ is indexed by the triple $(S, Q, \ell)$, where $S \subseteq [p]$, $Q \subseteq B_x$, and $\ell \leq k$; our goal is to show that $f_x(S, Q, \ell) = 1$ if and only if we can pick vertices of $G_x \setminus B_x$ that hit the parts given by $S$ once, $Q$ has no vertex in a part in $S$, and the picked vertices in $G_x \setminus B_x$, along with the vertices in $Q$, can be arranged in $\ell$ cliques.
    If $Q$ has a vertex in some part $\mathcal{V}_j$ where $j \in S$, $\ell < 0$, or $Q$ has two vertices of the same part of $\mathcal{V}$, we set $f_x(S, Q, \ell) = 0$.
    
    \emph{Leaf node.} Since $B_x = \emptyset$ and $x$ has no children, we set $f_x(\emptyset, \emptyset, 0) = 1$.

    \emph{Introduce node.} Let $y \in V(T)$ be the child of $x$ and $v \in B_x \setminus B_y$.
    We compute $f_x(S, Q, \ell)$ according to the following equation, where $\clique(R) = 1$ if and only if $R$ is a clique; note that if there is some $u \in R$ with $\mathcal{V}(u) \in S$, then $u \in Q$ and we would have set $f_x(S, Q, \ell) = 0$.
    
    \begin{equation}
        \label{eq:cotw_intro}
        f_x(S, Q, \ell) = 
        \begin{cases}
            f_y(S, Q, \ell), &\text{ if $v \notin Q$;}\\
            \max\limits_{\{v\} \subseteq R \subseteq Q} f_y(S, Q \setminus R, \ell - 1)\cdot\clique(R), &\text{ otherwise.}
        \end{cases}
    \end{equation}
    
    In the first case of Equation~\ref{eq:cotw_intro}, any solution to $G_x$ that does not use $v$, which is the unique vertex in $V(G_y) \setminus V(G_x)$ is simply a solution to $G_y$ which was trivially extended to $G_x$.
    For the second case, if $v$ must be part of the solution $G^*_x$ to $G_x$ and $\mathcal{V}(v) \notin S$, we must cover $v$ with some clique of $G_x$ but, since $x$ introduces $v$, this new clique $R$ must be formed by vertices of $Q$, i.e. they must not have been covered in $G_y$.
    That is, $G^*_x$ can be obtained by extending a solution of $G_y$ that covers $Q \setminus R$, $G_y \setminus B_y$, is $S$-selective, and uses $\ell - 1$ cliques to do so.

    \emph{Forget node.} Again, let $y \in V(T)$ be the child of $x$ but $v \in B_y \setminus B_x$.
    We compute the table for $x$ according to the function:
    
    \begin{equation}
        \label{eq:cotw_forget}
        f_x(S, Q, \ell) = 
        \begin{cases}
            f_y(S, Q, \ell), &\text{ if $\mathcal{V}(v) \notin S$;}\\
            \max\left\{f_y(S \setminus \mathcal{V}(v), Q \cup \{v\}, \ell), f_y(S, Q, \ell)\right\}, &\text{ otherwise.}
        \end{cases}
    \end{equation}
    
    If $\mathcal{V}(v) \notin S$, then $v$ cannot be chosen in a solution to $G_x$ that respects $(S, Q, \ell)$, so it follows that, to obtain such a solution for $G_x$, any solution to $G_y$ that also respects $(S, Q, \ell)$ will be a solution to $G_x$ under these constraints.
    Otherwise, some vertex of $\mathcal{V}(v)$ must be covered by the $\ell$ cliques of $G_x$; this vertex could have been $v$ itself, and any solution to $G_y$ respecting $(S \setminus \mathcal{V}(v), Q \cup \{v\}, \ell)$ can be extended to a solution to $G_x$ that respects $(S, Q, \ell)$, or this vertex is not $v$, but then this vertex is not in $B_x$ nor $B_y$, since $\{v\} = B_y \setminus B_x$ and $\mathcal{V}(u) \cap S = \emptyset$ for every $u \in Q$, so any solution to $G_y$ that respects $(S, Q, \ell)$ will also be a solution to $G_x$ under the same constraints.

    \emph{Join node.} For a join node $x$, let $y,z \in V(T)$ be the children of $x$ but $v \in B_y \setminus B_x$.
    
    \begin{equation}
        \label{eq:cotw_join}
        f_x(S, Q, \ell) = \max\limits_{S_y \subseteq S}\ \max\limits_{Q_y \subseteq Q}\ \max\limits_{\ell_y \leq \ell}\left\{f_y(S_y, R, \ell_y)\cdot f_z(S \setminus S_y, Q \setminus Q_y, \ell - \ell_y)\right\}
    \end{equation}
    
    Any solution to $G_x$ under $(S, Q, \ell)$ can be seen as the union of two partial solutions $G^*_y$ and $G^*_z$ to $G_y$ and $G_z$, respectively.
    In particular, since $V(G^*_y) \cap V(G^*_z) \subseteq V(G_y) \cap V(G_z) = B_x$, only a subset $S_y \subseteq S$ of the parts may have been hit by the $\ell_y$ cliques of $G^*_y$ using $Q_y \subseteq Q$ of the vertices of $B_y = B_x$, while the other $S \setminus S_y$ parts must have been hit by $\ell - \ell_y$ cliques picked by $G^*_z$ using the remaining $Q \setminus Q_y$ vertices of $B_z = B_x$.
    Since this choice of $(S_y, Q_y, \ell_y)$ is unknown at first, we must test all $\bigO{2^{|S|+|Q|}k}$ possibilities.
    
    To read the solution to $(G, \mathcal{V}, k)$ we look at the root node $r \in V(T)$.
    Since it is a forget node and has no parent, $B_r = \emptyset$; consequently, $G_r = G$ has a solution that touches all parts of $\mathcal{V}$ if and only if $F_r([p], \emptyset, k) = 1$.
    Since the most expensive nodes to be computed are the join nodes, it holds that our algorithm runs in $\bigO{3^{t+p}k^2t^2n}$, where the $t^2$ factor is obtained by computing the $\clique(R)$ function of introduce nodes.
\end{proof}


\begin{corollary}
    When parameterized by treewidth and number of colors, \pname{Selective Clique Partition} can be solved in \FPT\ time.
\end{corollary}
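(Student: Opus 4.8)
The plan is to obtain the corollary directly from Theorem~\ref{thm:cotw} together with Observation~\ref{obs:sel_clique}, exploiting the fact that, for \pname{Selective Clique Partition}, the number of parts is already controlled by the treewidth and the number of colors. Concretely, given an instance $(G, \mathcal{V}, k)$ whose (nice) tree decomposition has width $t$, I would first invoke Observation~\ref{obs:sel_clique}: if $p > kt$, then no partition into $k$ cliques can hit every part exactly once, so I immediately report that $(G, \mathcal{V})$ is a negative instance. This test costs only constant time once $p$, $k$, and $t$ are known.

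In the remaining case we have $p \leq kt$, so the number of parts is bounded by a computable function of the two parameters $t$ and $k$. I would then run the dynamic programming algorithm of Theorem~\ref{thm:cotw} verbatim; its correctness is unaffected by how $p$ relates to the parameters, so the only thing left to verify is that the running time stays \FPT\ under the new parameterization. Substituting $p \leq kt$ into the original bound $\bigO{3^{t+p}k^2 t^2 n}$ yields $\bigO{3^{t + kt}k^2 t^2 n} = \bigO{3^{t(k+1)}k^2 t^2 n}$, which is of the form $g(t,k)\cdot\poly(n)$, as required.

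Since there is essentially no additional algorithmic work beyond the preprocessing bound of Observation~\ref{obs:sel_clique}, I do not anticipate a genuine obstacle: the content of the corollary is simply the realization that bounding $p$ by $kt$ collapses the joint parameter (treewidth, number of parts) onto the weaker joint parameter (treewidth, number of colors) at no cost. The only point requiring mild care is the bookkeeping that the substitution $p \leq kt$ keeps every exponential factor inside the $g(t,k)$ envelope rather than leaking into the polynomial part, which the explicit bound above confirms.
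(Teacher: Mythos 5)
Your proposal is correct and matches the paper's intended argument exactly: the corollary follows by combining Observation~\ref{obs:sel_clique} (reject if $p > kt$) with the algorithm of Theorem~\ref{thm:cotw}, whose running time $\bigO{3^{t+p}k^2t^2n}$ becomes $\bigO{3^{t(k+1)}k^2t^2n}$ under the bound $p \leq kt$. No further comment is needed.
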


\begin{corollary}
    \pname{Selective Coloring} parameterized by cotreewidth and number of colors is in \FPT.
\end{corollary}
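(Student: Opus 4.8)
The plan is to exploit the standard complementation duality between proper colorings and clique partitions, reducing \pname{Selective Coloring} on $G$ to \pname{Selective Clique Partition} on the complement $\overline{G}$. First I would observe that a subset $X \subseteq V(G)$ admits a proper $k$-coloring of $G[X]$ --- that is, $X$ can be partitioned into at most $k$ independent sets of $G$ --- if and only if $X$ can be partitioned into at most $k$ cliques of $\overline{G}$, since the independent sets of $G$ are exactly the cliques of $\overline{G}$. Consequently, $(G, \mathcal{V}, k)$ is a positive instance of \pname{Selective Coloring} if and only if $(\overline{G}, \mathcal{V}, k)$ is a positive instance of \pname{Selective Clique Partition}: both ask for a $[p]$-selective set that decomposes into $k$ parts, each of which is an independent set on one side and a clique on the other.

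Next I would verify that this transformation preserves the relevant parameters. By definition, the cotreewidth of $G$ is precisely $\tw(\overline{G})$, so the treewidth of the instance passed to \pname{Selective Clique Partition} equals the cotreewidth of the original instance, while the number of colors $k$ is carried over verbatim. The complement $\overline{G}$ is computable in polynomial time, and a tree decomposition of $\overline{G}$ of width $\tw(\overline{G})$ (exactly, or within an \FPT-approximation factor) can be produced in \FPT\ time as a preprocessing step. With these in hand, I would simply invoke the preceding corollary, which solves \pname{Selective Clique Partition} in \FPT\ time when parameterized by treewidth and number of colors; composing the polynomial-time complementation with that routine yields the desired \FPT\ algorithm for \pname{Selective Coloring} under cotreewidth and number of colors.

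The only point requiring care --- rather than a genuine obstacle --- is aligning the two problem statements precisely, namely checking that a partition into at most $k$ cliques of $\overline{G}$ corresponds exactly to a proper $k$-coloring of the chosen induced subgraph of $G$, and that the one-vertex-per-part constraint is untouched by complementation. This last fact is immediate because $\mathcal{V}$ partitions the common vertex set $V(G) = V(\overline{G})$ and is therefore identical on both sides. Everything else follows directly from the earlier results, so the argument is short once the equivalence is stated.
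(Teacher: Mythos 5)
Your proposal is correct and follows exactly the route the paper intends: the corollary is obtained by complementing $G$ (so that cotreewidth becomes treewidth and $k$-colorability of an induced subgraph becomes partitionability into at most $k$ cliques) and invoking the preceding corollary that \pname{Selective Clique Partition} is \FPT\ parameterized by treewidth and number of colors. Your extra care in aligning the two problem formulations and noting that $\mathcal{V}$ is untouched by complementation matches the paper's setup, which introduces \pname{Selective Clique Partition} precisely as the complementary problem for this purpose.
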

    \section{Negative kernelization results}

While our previous result about the fixed parameter tractability of \pname{Selective Coloring} under distance to cluster implies the existence of an \FPT\ algorithm when parameterized by the vertex cover number, nothing was known in terms of kernelization for either parameter.
In this section we show that, for every fixed $k \geq 1$, \pname{Selective Coloring} does not admit a polynomial kernel when jointly parameterized by vertex cover and number of parts in the partition; throughout this section, we refer to this problem as \pname{Selective $k$-Coloring}
This implies that \pname{Multicolored Independent Set} does not admit a kernel when parameterized by vertex cover and number of color classes, contrasting with the polynomial kernel for \pname{Independent Set} parameterized by vertex cover~\cite{minor_free_kernel}.
We are going to show that the \NPc\ problem \pname{3-Coloring} on 4-regular graphs~\cite{quartic_coloring} OR-cross-composes~\cite{cross_composition} into \pname{Selective $k$-Coloring} parameterized by the vertex number and number of parts.

\noindent\textbf{Construction.}
Let $\mathcal{H} = (H_1, \dots, H_t)$ be the input instances to \pname{3-Coloring} on 4-regular graphs and $(G, \mathcal{V})$ the instance of \pname{Selective $k$-Coloring} we wish to build.
We may further suppose that all $t$ input graphs are simply sets of edges over the same ground set $V(\mathcal{H}) = [n]$.
For each vertex $v \in V(\mathcal{H})$, we add the vertex gadget $G_v$, which has vertex set $V(G_v) = \bigcup_{i \in [3]} A_i(v) \cup A_e(v) \cup Q(v)$; its edges are such that each $A_i(v) = \{v_i(u) \mid u \in V(\mathcal{H}) \setminus \{v\}\}$, $i \in [3] \cup \{e\}$, is an independent set, $(A_1(v), A_2(v), A_3(v))$ is a complete tripartite graph, and $Q(v)$ is a set of four disjoint cliques $Q_{1,2}(v), Q_{1,3}(v), Q_{2,3}(v), Q_e(v)$, each of size $k-1$, and no other edges exists.
We also partition $G_v$ into the family $\parts(G_v)$ so that, for each $u \in V(\mathcal{H}) \setminus \{v\}$, $\{v_1(u), v_2(u), v_3(u), v_e(u)\} \in \parts(G_v)$ and, for each $x \in Q(v)$, we also have $\{x\} \in \parts(G_v)$, which implies $|\parts(G_v)| = n + 4k - 5$.
An example with $n = 5$ is given in Figure~\ref{fig:comp_vertex_gadget}.

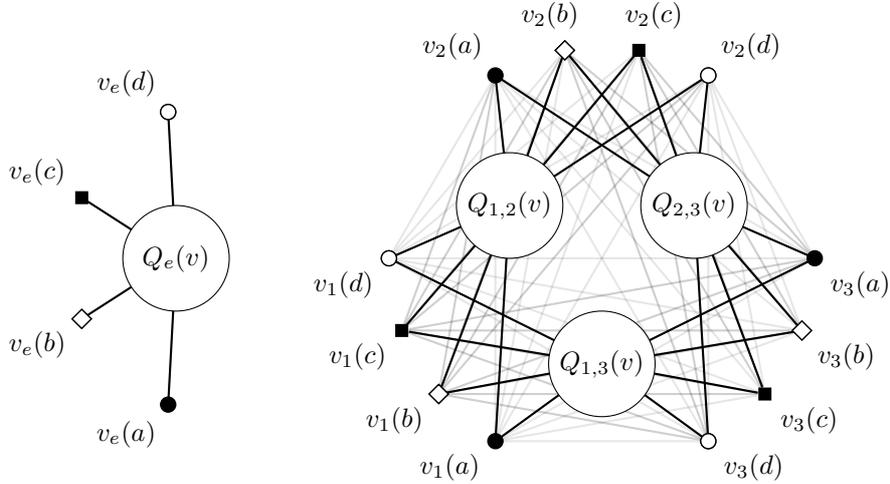
\begin{figure}[!htb]
    \centering
     \begin{tikzpicture}[scale=0.7]
        \GraphInit[unit=3,vstyle=Normal]
        \SetVertexNormal[Shape=circle, FillColor=black, MinSize=3pt]
        \tikzset{VertexStyle/.append style = {inner sep = \inners, outer sep = \outers}}
        \SetVertexLabelOut
        \begin{scope}[xshift=-5cm]
            \Vertex[a=247.5, d=3, Lpos=247.5, Math, L={v_e(a)}]{ae}
            \Vertex[a=202.5, d=3, Lpos=202.5, Math, L={v_e(b)}]{be}
            \Vertex[a=157.5, d=3, Lpos=157.5, Math, L={v_e(c)}]{ce}
            \Vertex[a=112.5, d=3, Lpos=112.5, Math, L={v_e(d)}]{de}
            \Vertex[x=-1, y=0, NoLabel]{qe}
            \Edges(ae,qe,be)
            \Edges(ce,qe,de)
            \draw[fill=white] (-1,0) circle(1cm);
            \node at (-1,0) {$Q_e(v)$};
        \end{scope}
        \begin{scope}[xshift=2cm]
            \begin{scope}
               \Vertex[a=240, d=4, Lpos=210, Math, L={v_1(a)}]{a1}
               \Vertex[a=220, d=4, Lpos=210, Math, L={v_1(b)}]{b1}
               \Vertex[a=200, d=4, Lpos=210, Math, L={v_1(c)}]{c1}
               \Vertex[a=180, d=4, Lpos=210, Math, L={v_1(d)}]{d1}
            \end{scope}
            
            \begin{scope}
               \Vertex[a=120, d=4, Lpos=120, Math, L={v_2(a)}]{a2}
               \Vertex[a=100, d=4, NoLabel]{b2}
               \node at (102:4.7) {$v_2(b)$};
               \Vertex[a=80, d=4, NoLabel]{c2}
               \node at (78:4.7) {$v_2(c)$};
               \Vertex[a=60, d=4, Lpos=60, Math, L={v_2(d)}]{d2}
            \end{scope}
            
            \begin{scope}
               \Vertex[a=360, d=4, Lpos=330, Math, L={v_3(a)}]{a3}
               \Vertex[a=340, d=4, Lpos=330, Math, L={v_3(b)}]{b3}
               \Vertex[a=320, d=4, Lpos=330, Math, L={v_3(c)}]{c3}
               \Vertex[a=300, d=4, Lpos=330, Math, L={v_3(d)}]{d3}
            \end{scope}
            
            \Vertex[a=150, d=2, NoLabel]{q12}
            \Vertex[a=270, d=2, NoLabel]{q13}
            \Vertex[a=30, d=2, NoLabel]{q23}
            \foreach \i in {a,b,c,d} {
               \foreach \j in {2,3} {
                   \Edge(q1\j)(\i1)
                   \Edge(q1\j)(\i\j)
                   \Edges[style={ opacity=0.1}](a1,\i\j,b1,\i\j,c1,\i\j,d1)
               }
               \foreach \j in {3} {
                   \Edge(q2\j)(\i2)
                   \Edge(q2\j)(\i\j)
                   \Edges[style={ opacity=0.1}](a2,\i\j,b2,\i\j,c2,\i\j,d2)
               }
            }
            \draw[fill=white] (150:2) circle(1cm);
            \node at (150:2) {$Q_{1,2}(v)$};
            \draw[fill=white] (270:2) circle(1cm);
            \node at (270:2) {$Q_{1,3}(v)$};
            \draw[fill=white] (30:2) circle(1cm);
            \node at (30:2) {$Q_{2,3}(v)$};
            
            \SetVertexNoLabel
            \begin{scope}
               \tikzset{VertexStyle/.append style = {shape = rectangle, inner sep = 2.2pt}}
               \Vertex[Node]{c1}
               \Vertex[Node]{c2}
               \Vertex[Node]{c3}
               \Vertex[Node]{ce}
            \end{scope}
            \begin{scope}
               \tikzset{VertexStyle/.append style = {shape = diamond, inner sep = 2pt}}
               \Vertex[Node]{b1}
               \Vertex[Node]{b2}
               \Vertex[Node]{b3}
               \Vertex[Node]{be}
               \AddVertexColor{white}{b1,b2,b3,be}
            \end{scope}
            \begin{scope}
               \tikzset{VertexStyle/.append style = {shape = circle, inner sep = 2pt}}
               \Vertex[Node]{a1}
               \Vertex[Node]{a2}
               \Vertex[Node]{a3}
               \Vertex[Node]{ae}
               \Vertex[Node]{d1}
               \Vertex[Node]{d2}
               \Vertex[Node]{d3}
               \Vertex[Node]{de}
               \AddVertexColor{white}{d1,d2,d3,de}
            \end{scope}
        \end{scope}
    \end{tikzpicture}
    \caption{Vertex gadget for vertex $v \in V(H)$ where $V(\mathcal{H}) = \{a,b,c,d,v\}$, and each $Q_{i,j}(v)$ is a clique of size $k-1$. Each differently shaped/colored set of four vertices corresponds to a different part of $\mathcal{V}$.}
    \label{fig:comp_vertex_gadget}
\end{figure}

To obtain $(G, \mathcal{V})$, we begin by adding to $G$ one copy of the vertex gadget $G_v$ for each $v \in V(\mathcal{H})$.
Now, for each $H_j \in \mathcal{H}$, we add a vertex $y_j$ to $G$, which we make adjacent to all vertices $u \in Q(G)$, where $Q(G) = \bigcup_{v \in V(H)} Q(v)$ and, for each pair $u,v \in V(\mathcal{H})$, if $u \in N_{H_j}(v)$, we add an edge between $y_j$ and $v_e(u)$, otherwise we add an edge between $y_j$ and $v_i(u)$, for each $i \in [3]$.
Finally, if there is some $H_j \in \mathcal{H}$ where $uv \in E(H_j)$, we add edge $v_i(u)u_i(v)$ to $G$, for every $i \in [3]$, and set $\mathcal{V} = \bigcup_{v \in V(\mathcal{H})} \parts(G_v) \cup \{Y\}$, where $Y =\{y_1, \dots, y_t\}$.
Intuitively, by choosing vertex $y_j \in Y$ to be part of the solution forces us to only consider the choices in $G \setminus Y$ that are compatible with the edges of $H_j$: we cannot pick the excess vertices of $A_e(v)$ corresponding to edges of $H$, otherwise $\{y_j, v_e(a)\} \cup Q_e(v)$ is a clique of size $k+1$, nor can we pick $v_i(b) \in A_i(v)$ corresponding to a non-edge of $E(H_j)$, otherwise $\{y_j, v_i(b)\} \cup Q_{i,\ell}(v)$ is a clique of size $k+1$.
Moreover, note that, since every $y_j \in Y$ is adjacent to all vertices of $Q(G)$ and each clique of $Q(G)$, the vertices of $G \setminus Q(G)$ we may pick \textit{must} be a single color class in any solution to $(G, \mathcal{V})$, i.e. they must be an independent set.
We formalize this as Observation~\ref{obs:ind_set}.

\begin{observation}
    \label{obs:ind_set}
    In any $k$-coloring of a subgraph $G^*$ of $G$ that contains some $y_j \in Y$ and $Q(G)$, $y_j$ and $u \in V(G^*) \setminus \{y\}$ have the same color if and only if $u \notin Q(G)$.
\end{observation}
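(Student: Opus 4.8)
The plan is to prove both directions of the biconditional by exploiting that $y_j$ is adjacent to every vertex of $Q(G)$, so that $y_j$ together with any of the size-$(k-1)$ cliques sitting inside $Q(G)$ spans a clique on $k$ vertices that is entirely present in $G^*$ by hypothesis. The easy direction is the ``only if'' part, stated contrapositively: if $u \in Q(G)$, then $u$ is adjacent to $y_j$ by construction, and since we are looking at a proper $k$-coloring, $u$ and $y_j$ must receive distinct colors. Hence whenever $u$ and $y_j$ share a color we must have $u \notin Q(G)$.

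For the converse, I would fix $c_0 = \varphi(y_j)$ and argue that every vertex of $G^*$ lying outside $Q(G)$ is forced to take color $c_0$. The key point is that each such vertex $u$ is adjacent to the \emph{entirety} of some clique of $Q(G)$ of size $k-1$: if $u \in A_i(v)$ with $i \in [3]$ it dominates $Q_{i,j}(v)$ for an appropriate $j$; if $u \in A_e(v)$ it dominates $Q_e(v)$; and if $u \in Y \setminus \{y_j\}$ it is adjacent to all of $Q(G)$, hence to any such clique. Call this clique $Q_\star$. Since $\{y_j\} \cup Q_\star$ is a clique on $k$ vertices contained in $G^*$, a proper $k$-coloring assigns it all $k$ colors; as $y_j$ gets $c_0$, the $k-1$ vertices of $Q_\star$ use precisely the colors of $[k] \setminus \{c_0\}$. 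Because $u$ is adjacent to all of $Q_\star$, it is barred from every color in $[k] \setminus \{c_0\}$, and therefore $\varphi(u) = c_0 = \varphi(y_j)$.

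The only place where the construction must be consulted carefully, and thus the main obstacle, is verifying that every vertex of $V(G^*) \setminus (Q(G) \cup \{y_j\})$ indeed dominates a full size-$(k-1)$ clique of $Q(G)$; once this adjacency bookkeeping is settled, the coloring argument above is immediate. I would therefore organize the converse as a short case analysis over the membership of $u$ in $A_1(v)$, $A_2(v)$, $A_3(v)$, $A_e(v)$, and $Y$, noting that the hypothesis $Q(G) \subseteq V(G^*)$ is exactly what guarantees the relevant clique is present to be colored.
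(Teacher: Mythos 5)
Your proof is correct and is essentially the paper's own argument: the paper leaves this observation unproved, relying on the preceding remark that every $y_j$ is adjacent to all of $Q(G)$, so that $y_j$ together with any $(k-1)$-clique of $Q(G)$ (all present in $G^*$ by hypothesis) forms a $k$-clique whose $k-1$ non-$y_j$ colors are forbidden to any vertex dominating that clique, forcing it into $y_j$'s color class, while adjacency to $y_j$ handles the other direction. The only caveat is the one you flagged yourself: the adjacencies your case analysis needs---each vertex of $A_i(v)$ dominating some $Q_{i,\ell}(v)$ and each vertex of $A_e(v)$ dominating $Q_e(v)$---are those depicted in Figure~\ref{fig:comp_vertex_gadget} and used in the proof of Lemma~\ref{lem:composition}, even though the construction's prose misleadingly ends with ``no other edges exists''; you read the construction as intended.
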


\begin{lemma}
    \label{lem:composition}
    $(G,\mathcal{V})$ is a \YES\ instance of \pname{Selective $k$-coloring} if and only if some $H_j \in \mathcal{H}$ is 3-colorable.
\end{lemma}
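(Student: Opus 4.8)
The plan is to prove both directions by exploiting the rigid structure that Observation~\ref{obs:ind_set} forces on any solution. Since $Y$ is a single part and every $\{x\}$ with $x \in Q(G)$ is a (singleton) part, any solution must contain all of $Q(G)$ together with exactly one vertex $y_j \in Y$; I would fix this $y_j$ and argue that its index $j$ selects the very graph $H_j$ that we will 3-colour. The first step common to both directions is to pin down, for each gadget $G_v$ and each part $\{v_1(u),v_2(u),v_3(u),v_e(u)\}$, which vertex may be selected. Using the $(k+1)$-clique arguments sketched just before the statement — that $\{y_j\}\cup Q_e(v)\cup\{v_e(u)\}$ and $\{y_j\}\cup Q_{i,\ell}(v)\cup\{v_i(u)\}$ would be $(k+1)$-cliques — the presence of $y_j$ forces one to pick some $v_i(u)$ with $i\in[3]$ whenever $u\in N_{H_j}(v)$, and to pick $v_e(u)$ whenever $u\notin N_{H_j}(v)$.

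For the forward direction I would start from a solution containing $y_j$ and define $c(v)\in[3]$ to be the common index $i$ of the selected vertices $v_i(u)$ with $i\in[3]$ in $G_v$. The key point is that this is well defined: by Observation~\ref{obs:ind_set} every selected vertex outside $Q(G)$ gets the same colour as $y_j$, so if $G_v$ held two selected vertices $v_i(u)$ and $v_{i'}(u')$ with $i\neq i'$, the complete tripartite edges between $A_i(v)$ and $A_{i'}(v)$ would place two equally-coloured adjacent vertices in the solution, a contradiction; and since $H_j$ is $4$-regular, $v$ has a neighbour in $H_j$, so at least one vertex $v_i(u)$ with $i\in[3]$ is indeed selected. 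It then remains to check that $c$ properly colours $H_j$: for an edge $uv\in E(H_j)$ the solution selects $v_{c(v)}(u)$ in $G_v$ and $u_{c(u)}(v)$ in $G_u$, and if $c(u)=c(v)$ these would be joined by the inter-gadget edge $v_i(u)u_i(v)$ and again be two monochromatic adjacent selected vertices, so $c(u)\neq c(v)$.

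For the backward direction, given a proper 3-colouring $c$ of some $H_j$, I would select $y_j$, all of $Q(G)$, the vertex $v_{c(v)}(u)$ for each $u\in N_{H_j}(v)$, and $v_e(u)$ for each $u\notin N_{H_j}(v)$; this selects exactly one vertex of every part. I would then colour $y_j$ and every selected $A$-vertex with colour $k$, colour each clique $Q_{i,\ell}(v)$ and $Q_e(v)$ with the palette $[k-1]$, and verify properness edge by edge. The only edges meeting two selected vertices are the $Q$-to-$A$ edges (colour $k$ versus $[k-1]$, fine); the edges incident to $y_j$, which by the selection rule never reach a selected $A$-vertex (at a neighbour position $u\in N_{H_j}(v)$ we select $v_{c(v)}(u)$ whereas $y_j$ is joined only to $v_e(u)$, and at a non-neighbour position we select $v_e(u)$ whereas $y_j$ is joined only to the $v_i(u)$, $i\in[3]$); and the inter-gadget edges $v_i(u)u_i(v)$.

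The hard part will be this last class of edges, because they are added whenever $uv\in E(H_{j'})$ for \emph{any} input graph $H_{j'}$, not merely for the chosen $H_j$, so I must show that this union of edge sets never produces a monochromatic selected edge. The clean argument is a case split on the pair $\{u,v\}$: if $uv\in E(H_j)$ then both endpoints select index-$[3]$ vertices, but with distinct indices $c(v)\neq c(u)$, so $v_{c(v)}(u)$ and $u_{c(u)}(v)$ are non-adjacent since the edge exists only between equal indices; whereas if $uv\notin E(H_j)$ then $u\notin N_{H_j}(v)$ and $v\notin N_{H_j}(u)$, so both gadgets select the vertices $v_e(u)$ and $u_e(v)$, which carry no inter-gadget edges at all, regardless of whether $uv$ happens to lie in some other $H_{j'}$. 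This confirms that the colouring is proper and completes the equivalence.
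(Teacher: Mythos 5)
Your proposal is correct and follows essentially the same approach as the paper's proof: both directions exploit the forced selection of $Q(G)$ and a single $y_j$, use Observation~\ref{obs:ind_set} together with the complete tripartite structure to pin down one index $c(v)$ per gadget, and derive properness of the induced 3-coloring from the inter-gadget edges. Your treatment is in fact slightly more explicit than the paper's on two points it leaves implicit — that 4-regularity guarantees $c(v)$ is defined for every vertex, and that inter-gadget edges coming from \emph{other} instances $H_{j'}$ are harmless because the $e$-indexed vertices carry no such edges.
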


\begin{proof}
    Let $\varphi$ be a 3-coloring of $H_j$.
    We pick a $k$-colorable subgraph $G^*$ of $G$ as follows: begin by adding $y_j$ to $G^*$; now, for each $v \in V(\mathcal{H})$, add every vertex in $Q(v)$ and in $A_e(v) \setminus N_G(y_j)$ to $G^*$.
    Finally, for each $i \in [3]$ and $v \in \varphi_i$, add $A_i(v) \setminus N_G(y_j)$ to $G^*$.
    To see that every $S \in \mathcal{V}$ has $S \cap V(G^*) \neq \emptyset$, note that, if $|S| = 1$, $S$ contains a vertex of some clique in $Q(G)$, which have been picked.
    Otherwise, either $S = Y$, which we have hit with $y_j$, or $S = \{v_1(u), v_2(u), v_3(u), v_e(u)\}$ for some $v \in V(\mathcal{H})$ and $u \in V(\mathcal{H}) \setminus \{v\}$.
    If $uv \notin E(H_j)$, then $v_e(u)$ has been picked, otherwise, since $v$ has been colored with a single color $i$, exactly one vertex of $S$ is in $V(G^*)$.
    We now color every vertex in $V(G^*) \setminus Q(G)$ with color $1$ and, for each $v \in V(\mathcal{H})$ and $Q \in Q(v)$, assign one of the remaining $k-1$ colors to a different vertex of $Q$.
    Since $Q(G)$ is a disjoint union of cliques and only color $1$ was used in vertices outside of $Q(G)$, if $I = V(G^*) \setminus Q(G)$ is an independent set we are done.
    So suppose to the contrary, that there is an edge $ab$ in $G^*[I]$.
    By the construction of $G$, this edge must be between vertices of different vertex gadgets, say $a \in G_u$ and $b \in G_v$ with $uv \in E(H_j)$, but it must be the case that $a = u_i(v)$ and $b = v_i(u)$, which contradicts the hypothesis that $\varphi$ is a 3-coloring of $H_j$.
    
    For the converse, let $G^*$ be a $k$-colorable subgraph of $G$ and $\psi$ one of its $k$-colorings.
    Note that $Q(G) \cup \{y_j\} \subseteq V(G^*)$ since, for each $x \in Q(G)$, $\{x\} \in \mathcal{V}$ and at least one $y_j$ must have been picked to hit $Y$.
    By Observation~\ref{obs:ind_set}, we conclude that $V(G^*) \setminus Q(G)$ is a color class of $\psi$.
    Moreover, $V(G^*) \cap G_v \setminus Q(v) = A_i(v) \setminus N_G(y_j)$, since every $S \in \parts(G_v)$ must be hit and the only vertices that can be picked are those not adjacent to $y_j$.
    To obtain the 3-coloring $\varphi$ of $H_j$, color $v \in V(H)$ with $i$ whenever there is some $v_i(u) \in V(G^*)$.
    Since $V(G^*) \setminus Q(G) \setminus \{y_j\}$ is an independent set, for every $uv \in E(H_j)$ we have that $\{v_i(u), u_i(v)\} \nsubseteq V(G^*)$, so $V(G^*) \cap A_i(u) = \emptyset$, which implies that $u$ was colored with $j \neq i$ and, consequently, $\varphi$ is a proper 3-coloring of $H_j$.
\end{proof}

\begin{lemma}
    Graph $G$ has a vertex cover with $\bigO{kn^2}$ vertices and $|\mathcal{V}| \in \bigO{n^2}$.
\end{lemma}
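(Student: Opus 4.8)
The plan is to exhibit one explicit vertex cover and to count the parts directly from the construction; both bounds then follow from elementary accounting once a single structural fact about $Y$ is isolated.

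First I would count the vertices of $G$. Each gadget $G_v$ contributes
$|A_1(v)| + |A_2(v)| + |A_3(v)| + |A_e(v)| + |Q(v)| = 4(n-1) + 4(k-1) = 4(n+k-2)$
vertices, so $\sum_{v \in V(\mathcal{H})} |V(G_v)| = 4n(n+k-2)$, and the only remaining vertices of $G$ are those of $Y = \{y_1, \dots, y_t\}$. The crucial observation for the cover is that $Y$ is an independent set: inspecting the construction, every edge incident to a vertex $y_j$ goes either to $Q(G)$ or to some $A_i(u)$ inside a gadget, and no edge is ever added between two vertices of $Y$. Consequently $V(G) \setminus Y = \bigcup_{v} V(G_v)$ is a vertex cover, since every edge either lies entirely inside $\bigcup_v V(G_v)$ or is incident to some $y_j$, in which case its other endpoint already lies in $\bigcup_v V(G_v)$. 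Its size is $4n(n+k-2)$, and using $n^2 \le kn^2$ and $kn \le kn^2$ for $k,n \ge 1$ this is $\bigO{kn^2}$.

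For the number of parts I would use $\mathcal{V} = \bigcup_{v} \parts(G_v) \cup \{Y\}$ together with the already-computed $|\parts(G_v)| = (n-1) + 4(k-1) = n + 4k - 5$, giving $|\mathcal{V}| = n(n + 4k - 5) + 1$. Since $k$ is a fixed constant in \pname{Selective $k$-Coloring}, the $4kn$ contribution is $\bigO{n}$ and the whole expression is $\bigO{n^2}$.

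There is no genuinely hard step here; the only two points requiring care are verifying that $Y$ is independent, which is what legitimizes discarding it while still covering every edge, and remembering that $k$ is fixed so that the $kn$-order terms are absorbed into the stated $\bigO{kn^2}$ and $\bigO{n^2}$ bounds. These two bounds are precisely what guarantees that the parameter $vc + |\mathcal{V}|$ of the produced instance is polynomial in $\max_j |H_j|$, as required by the OR-cross-composition.
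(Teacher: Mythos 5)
Your proof is correct and follows essentially the same route as the paper's: both observe that $Y$ is an independent set so that $V(G)\setminus Y = \bigcup_v V(G_v)$ is a vertex cover, compute $|V(G_v)| = 4(n+k-2)$ and $|\parts(G_v)| = n+4k-5$, and sum over the $n$ gadgets to get the two bounds. Your version is slightly more explicit than the paper's (spelling out why discarding $Y$ still covers every edge, and why the $kn$-order terms are absorbed since $k$ is fixed), but the substance is identical.
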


\begin{proof}
    For the first statement, note that $Y$ is an independent set and $G \setminus Y$ depends only on the size $n$ of the input instances: each $G_v$ has $|A_e(v)| + \sum_{i \in [3]} |A_i(v)| + |Q(v)| = 4(n + k - 2)$ vertices, implying $|V(G) \setminus Y| = 4n^2 + 4n(k - 2)$.
    For the latter, we again have $|\mathcal{V}| = 1 + \sum_{v \in V(\mathcal{H})} |\parts(G_v)| = n^2 + n(4k - 5) + 1$.
\end{proof}

\begin{theorem}
    \label{thm:no_kernel_vc}
    For every fixed $k \geq 1$, \pname{Selective $k$-Coloring} does not admit a polynomial kernel when jointly parameterized by the vertex cover number and the number of parts in the partition, unless $\NP \subseteq \coNP/\poly$.
\end{theorem}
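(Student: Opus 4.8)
The plan is to apply the OR-cross-composition framework of Bodlaender et al.~\cite{cross_composition}: to rule out a polynomial kernel for a parameterized problem $\Pi$, it suffices to exhibit, for some \NPc\ language $L$, a polynomial equivalence relation $\mathcal{R}$ together with a polynomial-time algorithm that maps $t$ instances of $L$ lying in a single $\mathcal{R}$-class to one instance of $\Pi$ whose parameter is polynomially bounded in $\max_i |x_i| + \log t$ and which is a \YES-instance of $\Pi$ if and only if at least one input is a \YES-instance of $L$. Combined with the result of Fortnow and Santhanam~\cite{fortnow_santh}, such a composition excludes a polynomial kernel unless $\NP \subseteq \coNP/\poly$. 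Since \pname{3-Coloring} on 4-regular graphs is \NPc~\cite{quartic_coloring}, I would take $L$ to be this problem and $\Pi$ to be \pname{Selective $k$-Coloring} under the joint parameter given by vertex cover and number of parts, and then argue that the construction preceding Lemma~\ref{lem:composition} is exactly such a composition.

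First I would fix $\mathcal{R}$: declare two well-formed instances equivalent whenever they encode 4-regular graphs on the same number $n$ of vertices, placing every malformed string in one extra class. This relation is decidable in polynomial time and partitions any set of strings of length at most $s$ into at most $s+1$ classes, so it is a polynomial equivalence relation; moreover, it justifies the assumption used in the construction that $H_1, \dots, H_t$ are edge sets over a common ground set $[n]$. I would then verify the three defining conditions. The construction adds one vertex gadget per element of $[n]$, one selector $y_j$ per input graph, and a polynomial number of edges, hence runs in time polynomial in $\sum_i |H_i|$. For the parameter, the preceding lemma yields a vertex cover of size $\bigO{kn^2}$ and $|\mathcal{V}| \in \bigO{n^2}$; as $k$ is fixed and $n \leq \max_i |H_i|$, the combined parameter is $\bigO{n^2}$, hence polynomial in $\max_i |H_i|$ and, crucially, independent of $t$. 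Finally, the required equivalence is precisely Lemma~\ref{lem:composition}, which states that $(G, \mathcal{V})$ is a \YES-instance if and only if some $H_j$ is 3-colorable. Assembling these three facts, the cross-composition framework delivers the theorem.

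The genuinely delicate condition is the parameter bound: an ordinary polynomial reduction could let the parameter scale with $t$, whereas cross-composition forbids this, and verifying that neither the vertex cover nor the number of parts grows with $t$ is where the argument could fail. The construction sidesteps the difficulty by reusing a single family of vertex gadgets (which depend only on $n$) across all $t$ inputs, and by encoding each $H_j$ with a single selector $y_j$; the selectors form an independent set, so they do not inflate the vertex cover, and they are all collected into one part $Y$, so they contribute only a single part. Picking exactly one vertex of $Y$ then corresponds to choosing which instance to solve. This is precisely why the vertex cover and the number of parts both remain $\bigO{n^2}$ regardless of $t$, and is the heart of why the framework applies here. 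I would close by noting that specializing to $k = 1$ transfers the same conclusion to \pname{Multicolored Independent Set}, since \pname{Selective $1$-Coloring} coincides with that problem.
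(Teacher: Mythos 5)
Your proposal is correct and follows exactly the paper's route: the paper proves the theorem by observing that the construction, together with Lemma~\ref{lem:composition} (the OR-equivalence) and the lemma bounding the vertex cover by $\bigO{kn^2}$ and $|\mathcal{V}|$ by $\bigO{n^2}$, constitutes an OR-cross-composition from \pname{3-Coloring} on 4-regular graphs. Your explicit treatment of the polynomial equivalence relation (grouping instances by number of vertices) is a detail the paper leaves implicit in its assumption that all inputs share the ground set $[n]$, but it is the same argument.
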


Theorem~\ref{thm:no_kernel_vc} is a directly implied by the previous two lemmas.
It is important to observe that at no point in the proof of Lemma~\ref{lem:composition} we rely on the fact that $Y$ is an independent set, the only important feature is that $Y \in \mathcal{V}$; as such we may encode a member of whichever non-trivial graph class $\mathcal{G}$ we desire in $G[Y]$.

\begin{corollary}
    For every fixed $k \geq 1$ and non-trivial graph class $\mathcal{G}$, \pname{Selective $k$-Coloring} does not admit a polynomial kernel when parameterized by distance to $\mathcal{G}$ and the number of parts of $\mathcal{V}$, unless $\NP \subseteq \coNP/\poly$.
\end{corollary}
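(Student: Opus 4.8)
The plan is to reuse the OR-cross-composition behind Theorem~\ref{thm:no_kernel_vc} almost verbatim, altering only the edges placed inside the selector set $Y = \{y_1, \dots, y_t\}$. The crucial point, already highlighted after Theorem~\ref{thm:no_kernel_vc}, is that the proof of Lemma~\ref{lem:composition} never inspects $G[Y]$: it uses only that $Y \in \mathcal{V}$ is a single part and that the edges between each $y_j$ and $V(G) \setminus Y$ encode $H_j$. I would therefore keep the vertex gadgets $G_v$, their parts $\parts(G_v)$, the part $Y$, and every edge incident to $V(G)\setminus Y$ exactly as in the original construction, leaving the edges within $Y$ as the only free choice.

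Next I would realise $G[Y]$ as a member of $\mathcal{G}$ on exactly $|Y| = t$ vertices. Reading ``non-trivial'' as \emph{hereditary and of unbounded order}, Ramsey's theorem forces $\mathcal{G}$ to contain either all complete graphs or all edgeless graphs: any member of $\mathcal{G}$ on at least $R(m,m)$ vertices contains an induced $K_m$ or an induced $\overline{K_m}$, heredity pushes this down to all smaller orders, and pigeonhole over $m$ selects one of the two families entirely. If $\mathcal{G}$ contains all edgeless graphs I leave $Y$ independent, exactly as before; if it contains all complete graphs I instead add every edge $y_iy_j$, turning $G[Y]$ into $K_t$. In either case $G[Y] \in \mathcal{G}$, so $U = V(G) \setminus Y$ is a $\mathcal{G}$-modulator and the distance to $\mathcal{G}$ is at most $|U|$.

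With $G[Y]$ fixed this way, I would then check that the two quantitative bounds preceding Theorem~\ref{thm:no_kernel_vc} carry over unchanged. Lemma~\ref{lem:composition} applies verbatim, so $(G, \mathcal{V})$ is a yes-instance of \pname{Selective $k$-Coloring} if and only if some $H_j$ is $3$-colorable, which is precisely the OR we must realise. Since $U = V(G)\setminus Y$ is untouched, $|U| \in \bigO{kn^2}$ and $|\mathcal{V}| \in \bigO{n^2}$, both polynomial in $n = |V(\mathcal{H})|$ and independent of the number $t$ of composed instances, with $k$ a fixed constant. Hence the joint parameter (distance to $\mathcal{G}$ plus number of parts) stays in $\bigO{kn^2}$, the map is a legitimate OR-cross-composition of the $\NPH$ problem \pname{3-Coloring} on $4$-regular graphs, and the usual machinery excludes a polynomial kernel unless $\NP \subseteq \coNP/\poly$.

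The main obstacle, and the only genuine departure from Theorem~\ref{thm:no_kernel_vc}, is ensuring that $\mathcal{G}$ really contains a graph on precisely $t$ vertices that can host $G[Y]$ while the modulator stays of size $\bigO{kn^2}$; this is exactly what the Ramsey dichotomy buys us for every hereditary class of unbounded order. Should one wish to accommodate classes whose attainable orders have gaps, I would pad $\mathcal{H}$ with a polynomial number of trivial no-instances until $t$ reaches the nearest order realised by $\mathcal{G}$, which leaves both the OR and the polynomial parameter bound intact. All remaining details are inherited directly from the cross-composition of Theorem~\ref{thm:no_kernel_vc}.
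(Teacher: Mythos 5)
Your proposal is correct and takes essentially the same route as the paper: the paper's entire justification of this corollary is the observation that the proof of Lemma~\ref{lem:composition} never inspects the edges inside $Y$, only that $Y \in \mathcal{V}$, so one may realize $G[Y]$ as a member of $\mathcal{G}$, leaving $V(G) \setminus Y$ as a $\mathcal{G}$-modulator of size $\bigO{kn^2}$ with $|\mathcal{V}| \in \bigO{n^2}$ unchanged. Your additional care about whether $\mathcal{G}$ actually contains a graph on exactly $t$ vertices (the Ramsey dichotomy for hereditary classes and the padding with trivial no-instances) is extra rigor on a point the paper glosses over, not a different argument.
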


The most important consequence of Theorem~\ref{thm:no_kernel_vc}, however, is when we fix $k=1$, i.e. we are simply looking for an independent set containing one vertex of each part of $\mathcal{V}$.
This problem is the widely used \pname{Multicolored Independent Set} problem and, to the best of our knowledge, the negative kernelization result under vertex cover we present was not previously known.
This is in contrast with the even more classic  \pname{Independent Set} problem, which was known to admit a polynomial kernel when parameterized by vertex cover~\cite{minor_free_kernel}.

\begin{corollary}
    \pname{Multicolored Independent Set} does not admit a polynomial kernel when jointly parameterized by the vertex cover number and the number of colors, unless $\NP \subseteq \coNP/\poly$.
\end{corollary}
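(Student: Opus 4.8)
The plan is to specialize Theorem~\ref{thm:no_kernel_vc} to the case $k = 1$ and then transport the resulting kernelization lower bound across the equivalence between \pname{Selective $1$-Coloring} and \pname{Multicolored Independent Set}. First I would recall that an induced subgraph of $G$ is $1$-colorable if and only if it is an independent set; hence a solution to \pname{Selective $1$-Coloring} on $(G, \mathcal{V})$ is exactly an independent set that contains precisely one vertex of each part of $\mathcal{V}$. Reading the parts of $\mathcal{V}$ as color classes, this is verbatim the definition of a multicolored independent set, with the number of parts of $\mathcal{V}$ playing the role of the number of colors. This equivalence was already noted in the introduction.

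The key point to verify is that this equivalence is parameter-preserving at the identity level. The underlying graph $G$ is left unchanged, so its vertex cover number is the same in both formulations, and the partition $\mathcal{V}$ coincides exactly with the collection of color classes, so the number of parts equals the number of colors. Consequently, any polynomial kernel for \pname{Multicolored Independent Set} parameterized by vertex cover and number of colors would immediately yield a polynomial kernel for \pname{Selective $1$-Coloring} parameterized by vertex cover and number of parts, and conversely. No gadgetry or rescaling of parameters is needed, since both the graph and the partition are carried over without modification.

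I would then invoke Theorem~\ref{thm:no_kernel_vc} with $k = 1$, which asserts that \pname{Selective $1$-Coloring} admits no polynomial kernel under the joint parameterization by vertex cover number and number of parts unless $\NP \subseteq \coNP/\poly$. Composing this with the identity-level equivalence of the previous paragraph yields the same lower bound for \pname{Multicolored Independent Set} parameterized by vertex cover and number of colors, under the same complexity-theoretic assumption.

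I do not anticipate a substantive obstacle, as the statement is a direct corollary; the only step demanding a moment of care is confirming that the equivalence preserves \emph{both} parameters simultaneously and exactly, so that the kernelization hardness transfers with no slack. Since $G$ and $\mathcal{V}$ are transported verbatim, this verification is immediate, and the result follows.
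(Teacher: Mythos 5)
Your proposal is correct and matches the paper's reasoning: the paper also derives this corollary by setting $k=1$ in Theorem~\ref{thm:no_kernel_vc} and invoking the verbatim equivalence (noted in its introduction) between \pname{Selective $1$-Coloring} and \pname{Multicolored Independent Set}, where parts of $\mathcal{V}$ are the color classes and both parameters transfer identically. Your extra care in checking that the equivalence preserves both parameters exactly is precisely the (implicit) content of the paper's one-line justification.
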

    \section{Final Remarks}

We presented an initial study of the parameterized complexity of the \pname{Selective Coloring} problem by showing parameterizations that lead to fixed-parameter tractability, but also showing that they do not allow us to find a polynomial kernel for the problem unless $\NP \subseteq \coNP\poly$.
Specifically, we proved that it is fixed parameter tractable when parameterized by distance to cluster, by treewidth and number of parts, and by cotreewidth and number of parts, generalizing a result of Demange et al.~\cite{selective_complexity} and showing that others are, in a sense, optimal.
Our most interesting contribution, however, is the proof that for every fixed $k \geq 1$, \pname{Selective $k$-Coloring} does not admit a polynomial kernel when parameterized by vertex cover and number of parts unless $\NP \subseteq \coNP/\poly$, which implies that \pname{Multicolored Independent Set} has no kernel under the same parameterization and assumption.
As future work, we would like to determine how local properties on the parts of $\mathcal{V}$ or between the parts may aid in the design of parameterized algorithms for typically intractable parameterizations, specially for the parameterization max leaf number and number of parts, for which we were unable to design a better algorithm then the one given in Section~\ref{sec:tw}.
We are also interested in investigating the parameterized complexity of the worst case scenario problem \pname{Max Selective Coloring}, i.e. finding a graph that hits every part of $\mathcal{V}$ and has maximum chromatic number.
While some of our ideas may translate naturally, the algorithms for treewidth and cotreewidth seem to break down.
It may also be interesting to examine the problem from the parameterized approximation point of view, as approximation was one of the main objectives of previous work on the subject~\cite{selective_complexity,min_max_selective}.

    \bibliographystyle{abbrv}
    \bibliography{main}
\end{document}